%% file: Main.tex
\documentclass[conference]{IEEEtran}

\input{PKG}

\allowdisplaybreaks

\begin{document}
	\title{{Identification for Inverse Gaussian Channels} \\[-.1em] \vspace{3mm}}
	\author{\vspace{2mm} \fontsize{12}{12} \selectfont Mohammad Javad Salariseddigh\IEEEauthorrefmark{1} and Christian Deppe\IEEEauthorrefmark{2}
	\vspace{4mm}
	\\
	\fontsize{11}{11} \selectfont \IEEEauthorrefmark{1}Technical University of Darmstadt\\
	\selectfont\IEEEauthorrefmark{2}Technical University of Braunschweig\vspace{6mm}
	\\
	\fontsize{11}{11} \selectfont This paper is dedicated to the memory of Professor Rudolf Ahlswede (1938-2010) \vspace{-2mm}
}

\maketitle

	\begin{abstract}
		We derive lower and upper bounds on the identification capacity of inverse Gaussian channels, a fundamental model for molecular communications in fluid environments. The analysis considers deterministic encoding schemes under a peak time constraint and characterizes the asymptotic growth of optimal codebook sizes. In the converse, by leveraging the Borel-Cantelli lemma and the super-exponential near-zero tail behavior of the inverse Gaussian distribution, we establish an almost-sure lower bound on the first-passage time of drifting diffusing molecules. This leads to the conclusion that the identification capacity exhibits super-exponential growth with the codeword length $n$, i.e., $\sim 2^{(n \log n)R_n}$, where $R_n$ is the coding rate.
	\end{abstract}

	\section{Introduction}

	In Shannon's classical message transmission problem \cite{S48,Shannon1957}, the aim is to reliably recover the transmitted message index $m$ from a set whose size grows exponentially with the codeword length $n$ as $2^{nR}$, where $R$ denotes the coding rate (bits/symbol). That is, decoding requires the receiver to perform a global hypothesis test over the entire message set. This forces the output to contain enough information to uniquely localize one codeword among all codewords, effectively extracting the full information content of the message. In contrast, in Ahlswede's identification problem \cite{AD89,Ahlswede06gen}, the receiver does not attempt to reconstruct $m$. Instead, for each candidate message $m'$, it performs an independent binary hypothesis test distinguishing between $\mathcal{H}_0: m = m'$ and $\mathcal{H}_1: m \neq m'$. The task is therefore not a single joint decision over all messages, but a collection of yes/no tests, each asking whether a given message is the true one, which enables vastly larger codebooks. Thus, while Shannon decoding forces a global resolution among all messages, identification only requires answering localized membership queries, analogous to checking whether a specific page is the target page without needing to read the entire book in detail to understand its content. In other words, the Shannon decoder must determine \emph{which} message occurred, whereas an identification test only asks \emph{whether} a specific message occurred.
	

	The identification problem has attracted significant attention, particularly in the realm of post-Shannon information theory and goal-oriented communication paradigms; see \cite{Goldreich12,Salariseddigh23_BSC_Future_Internet} for further background and applications. Notably, this framework has been considered for event detection in emerging 6G wireless systems \cite{boche2025novel,Salariseddigh23_BSC_Future_Internet} as well as in molecular communication (MC) scenarios \cite{Salariseddigh_PhD_Diss,akyildiz2015internet}. With randomized encoders \cite{AD89}, one observes that the number of identifiable messages exhibits doubly exponential growth in the codeword length, $n,$ whereas restricting to deterministic encoders \cite{AN99} typically leads to only exponential \cite{J85,Salariseddigh_IT} (finite input and output alphabets) or super-exponential growth in $n$ \cite{Salariseddigh_IT} (infinite input alphabet), reflecting a fundamental gap in achievable performance. Classical random coding arguments ensure the existence of identification codes with doubly exponential size \cite{AADT20,Sidorenko22}, however, these approaches rely on stochastic encoders or random number generators that are often impractical to implement and can be difficult to realize within MC systems. Identification problem for channels adapted specifically to MC scenarios including Poisson, generalized Poisson and binomial channels is studied in \cite{Salariseddigh-TMBMC,Salariseddigh_OJCOMS_23,Salariseddigh25_ITW,Salariseddigh_Binomial_ISIT}.
	 Constructions of identification codes have been extensively studied in the literature, as discussed in \cite{KT99,Verdu02,Zinoghli24,Vorobyev25,Lengerke26}. A close interplay between \emph{identification and function computation} has been established in recent works \cite{Salariseddigh24,Rosenberger24,ZhuFrey26}.
	  In this paper, we address the identification problem for MC systems modelled as inverse Gaussian (IG) channels.
	
	The IG channel has become one of the most established channel models for the rigorous analysis of MC systems operating in flow-assisted environments \cite[Ch. 8]{Nakano13}. In these systems, information is encoded in the emission timing of signaling molecules that propagate through a fluid medium toward a receiver. When a persistent drift component is present, arising for example in microfluidic channels or biological transport processes such as blood flow, the stochastic first-passage time of molecules is accurately described by the IG distribution \cite{Karatzas14}. Moreover, communication based on the time of release may be used in the brain at the synaptic cleft, where two chemical synapses communicate over a chemical channel \cite{Borst99}. This physical correspondence offers a principled and analytically grounded framework for characterizing molecular propagation. It thus advances the IG channel beyond the purely abstract treatment in \cite{Chhikara24}, establishing it as a model with direct applicability to realistic biochemical and nanonetwork scenarios \cite{Nakano13}. The IG distribution includes several distinctive features that are highly important for analysis:
	
 The IG distribution presents several challenges from an information-theoretic perspective. Being supported on $\mathbb{R}_{+}$, it is inherently asymmetric and exhibits a pronounced right tail, allowing rare but exceptionally large delay realizations to occur with non-negligible probability. Its tail probabilities decay more slowly than in Gaussian settings, so these extreme events can dominate error probabilities, weaken concentration effects, and reduce the sharpness of typicality- and large-deviation-based analyses. Moreover, the strong asymmetry making likelihood ratios and decision thresholds inherently one-sided and more difficult to calibrate. Furthermore, for a fixed shape parameter $\lambda$, its variance scales as $\mathrm{Var}(X)=\mu^{3}/\lambda$, where $\mu$ denotes the mean (signal level), implying that dispersion increases rapidly with the mean and complicating normalization and performance comparisons across operating regimes. The distribution is also highly sensitive to its parameters $(\mu,\lambda),$ i.e., drift or diffusion in the first-passage models, where small parameter variations can significantly reshape the tail behavior and skewness, making robust code/decoder design harder and asymptotic analysis more challenging.

Unlike the Gaussian channel \cite{S48}, whose capacity has a closed-form expression in terms of the signal-to-noise ratio, the Shannon capacity of the IG channel under average- or peak-delay constraints remains unknown. Existing work has focused on capacity bounds, asymptotic analyses, and numerical evaluations. Capacity and mutual-information bounds under various delay constraints were derived in \cite{Li14,Eckford12,Srinivas12,Eckford07,Eckford08,Farsad18}, while extensions to multiple indistinguishable particles, finite particle lifetimes, and drift-based IG models were studied in \cite{Farsad16_2,Kadloor12}. In this paper, we investigate the identification problem for the IG channel with deterministic encoders under a peak-delay constraint. We establish an achievability result via classical sphere-packing arguments \cite{CHSN13} and derive a converse under continuity of the IG distribution. To the best of our knowledge, identification over the IG channel has not been studied previously.

\subsection{Notations}
		
	Blackboard-bold symbols $\mathbbmss{X}, \mathbbmss{Y}, \mathbbmss{Z}, \ldots$ denote alphabets. Lowercase symbols $x, y, z, \ldots$ show constants or realizations of random variables (RVs); uppercase symbols $X, Y, Z, \ldots$ denote RVs. Bold lowercase letters $\mathbf{x}, \mathbf{y}, \mathbf{z}, \ldots$ denote row vectors. All logarithms are base $2$. Let $[\![M_n]\!] \triangleq \{1,\ldots,M_n\}$. The sets of nonnegative and real numbers are denoted by $\mathbb{R}_+$ and $\mathbb{R}$, respectively. For any positive integer $x$, $\Gamma(x)=(x-1)! \triangleq (x-1)(x-2)\cdots 1.$ We use the standard \emph{Bachmann--Landau} notation $o(\cdot)$, $\mathcal{O}(\cdot)$, $\omega(\cdot)$, and $\sim$. Here, $f(n)\sim g(n)$ denotes asymptotic equivalence, while $X\sim\mathcal{N}$ denotes that $X$ follows the distribution $\mathcal{N}.$ The Euclidean norm is denoted by $\|\mathbf{x}\|$. Let $\mathcal{S}_{\mathbf{x}_0}(n,r)=\{\mathbf{x}\in\mathbb{R}^n:\|\mathbf{x}-\mathbf{x}_0\|\le r\}$ denote the $n$-dimensional hypersphere of radius $r$ centered at $\mathbf{x}_0$, and $\mathbbmss{Q}_{\mathbf{0}}=\{\mathbf{x}\in\mathbb{R}^n:0\le x_t\le U,\ \forall t\in[\![n]\!]\}$ the $n$-dimensional hypercube of side length $U$. 
	Let $(\upOmega,\mathcal{F},\Pr)$ denote a probability space, where $\upOmega$ is the sample space, $\mathcal{F}$ is a $\sigma$-algebra of subsets of $\upOmega$, and $\Pr:\mathcal{F}\rightarrow[0,1]$ is a probability measure satisfying $\Pr(\upOmega)=1,$ and $\Pr\!\left(\bigcup_{n=1}^{\infty}E_n\right) = \sum_{n=1}^{\infty}\Pr(E_n),$ for every countable collection $\{E_n\}_{n=1}^{\infty}\subseteq\mathcal{F}$ of pairwise disjoint events. The elements of $\mathcal{F}$ are called \emph{events}. For every event $E\in\mathcal{F}$, $\Pr(E)$ denotes the probability of $E$. Let $(A_n)_{n\ge1}$ and $(B_n)_{n\ge1}$ be sequences of random variables on $(\upOmega,\mathcal{F},\Pr)$. We write $A_n \ge B_n$ eventually almost surely, if there exists an event $\Omega_0\in\mathcal{F}$ with $\Pr(\Omega_0)=1$ such that, for every $\omega\in\Omega_0$, there exists an integer $n(\omega)<\infty$ satisfying $A_n(\omega)\ge B_n(\omega), \forall\, n\ge n(\omega).$ We denote IG channel by $\bIG$.

	\subsection{Organization}
	The remainder of this manuscript is structured as follows. Section~\ref{Sec.SysModel} outlines the essential preliminaries for MC systems characterized by the IG channels. The principal results for the IG channels are presented in Section~\ref{Sec.Res}. Section~\ref{Sec.Conjecture} concludes with an open problem and a conjecture that motivate future research. Finally, Section~\ref{Sec.Conclusion} concludes the paper and discusses potential directions for future investigations.

	\section{Problem Setup, Model, and Preliminaries}
	\label{Sec.SysModel}
	Here, we adopt the system model and establish preliminaries for identification coding and capacity.
	
	\subsection{Brownian Motion}

	Brownian motion refers to the irregular motion of pollen particles suspended in water, first observed by the botanist Robert Brown in 1828 \cite[Ch. 2]{Karatzas14}. Let $0 = t_0 < t_1 < \cdots < t_n < \infty$ denote a sequence of time instants, and let $\{B_t\}_{t \ge 0}$ be a continuous-time stochastic process representing the position at time $t$ of a particle undergoing the Brownian motion and let define the increments and time steps by $\Delta_{B}^{t_j} = (B_{t_{j+1}} - B_{t_j})$ and $\Delta_t^{j} = (t_{j+1} - t_j),$ respectively. Then, the increments $\{\Delta_{B}^{t_j}\}_{j=0}^{n-1}$ are independent and depend only on $\Delta_t^{j}.$ Moreover, we say that $B_t$ has drift $v$ if $\mathbb{E}[\Delta_{B}^{t_j}] = v\Delta_t^{j},$ and has variance parameter $\sigma^2$ if $\mathrm{Var}(\Delta_{B}^{t_j}) = \sigma^2 \Delta_t^{j}.$ Then, each increment is a Gaussian RV, i.e.,
	\begin{align}
		\Delta_{B}^{t_j} \sim \mathcal{N} \big( v\Delta_t^{j}, \sigma^2 \Delta_t^{j} \big).
	\end{align}
	Following the framework of \cite{Srinivas12}, we assume that the transmitter releases one or more molecules into the fluid medium at specified times, after which the molecules propagate toward the receiver through diffusion and advection (flow). The receiver observes the corresponding arrival times and uses them to infer the original transmission times. We further assume that molecules cannot pass beyond the receiver boundary; upon arrival, each molecule is absorbed and does not re-enter the medium. As a result, we model the propagation as a one-dimensional process. This formulation, however, extends to higher dimensions as well, provided the medium is isotropic, ensuring that diffusion is directionally uniform. For a rigorous treatment, see \cite[Ch. 6]{Nakano13} and \cite[Ch. 4]{Berg93}.

	\subsection{Probability Density Function of Particle's Position}
	
		\begin{figure}[t]
		\noindent\hspace*{-3mm}
		\input{Pos_PDF.tex}
	\vspace{-4mm}
	\caption{Probability density $f_X(x;t)$ as a function of time $t$ for diffusion coefficient $D=10 \mu m^2/s$ fixed drift $v=10 \mu m/s$ and volatility $\sigma=2 \mu m$, shown for different time shots. Each curve represents the likelihood of the process being at a fixed position $x$ at time $t.$ As $x$ increases, the peak of the curve shifts to larger $t$, approximately where $vt \approx x$, reflecting the time required for the drift to reach that position.  Additionally, peaks become lower and broader for larger $x$ due to the increasing variance $\sigma^2 t$, which spreads the distribution over time.}
	\vspace{-5mm}
	\label{Fig.PDF_PP}
	\end{figure}
	
	Consider a fluid medium with positive drift velocity $v>0$ and free diffusion coefficient $D,$ where the Brownian process variance is given by $\sigma^2 = 2D.$ A molecule is released into this fluid at time $t = 0$ at position $x = 0.$ Under the Brownian process $\{ B_t \}_{t \ge 0},$ the probability density function (PDF) of the particle's position $x$ at time $t > 0$ is given by \cite{Karatzas14}
	\begin{align}
		f_X(x;t) = \frac{1}{\sqrt{2\pi\sigma^2 t}} \exp \bigg( - \frac{(x - vt)^2}{2\sigma^2 t} \bigg) ,
	\end{align}
	see Appendix \ref{App.PDF_Pos} for detailed derivations. The trend of such distribution versus position in different time snapshots is represented in Figure \ref{Fig.PDF_PP}.
	 Since the receiver behaves as a perfectly absorbing boundary, we focus exclusively on the first hitting time $Z$ at this boundary. We assume that the transmitter is located at the origin, while the receiver lies along the axis of interest at a distance $d > 0$. Under this configuration, the first arrival time $Z$ is defined as the earliest time at which the particle reaches the point $d$, that is,
	\begin{align}
		Z = \inf \{ t > 0 \,:\, X(t) = d \}.
	\end{align}
	Note that for the zero drift regime (pure diffusion), i.e., where $v=0,$ the PDF of $Z,$ denoted by $f_Z(z)$ reads \cite{Karatzas14,Chhikara24}
	\begin{align}
		f_Z(z) = \frac{d}{\sigma\sqrt{2\pi}}\, z^{-3/2} \exp\!\left( -\frac{d^2}{2\sigma^2 z} \right), \quad z>0,
	\end{align}
	see Appendix \ref{App.Eq_IG_FPT}. If the drift velocity $v>0,$ the PDF of $Z,$ denoted by $f_Z(z)$ is the IG distribution \cite{Chhikara24}:
	\begin{align*}
		f_{Z}(z) = 
		\begin{cases}
			\sqrt{\frac{\lambda}{2\pi}} z^{-3/2} \exp \Big( - \lambda (z - \mu)^2 / (2\mu^2 z ) \Big), & z > 0 ;
			\\
			\hspace{3cm} 0, & z \leq 0 ,
		\end{cases}
	\end{align*}
	where $\mu = d/v$ and $\lambda = d^2 / \sigma^2;$
	 In Appendix \ref{App.Eq_IG_FPT}, we demonstrate how the first-passage time PDF can be expressed in the form of an IG PDF for appropriately chosen parameters. We denote such distribution by $Z \sim \text{IG}(\mu,\lambda).$ Following \cite{Srinivas12}, we observe that when $v = 0,$ the RV $Z$ no longer follows an IG distribution but instead follows a Lévy distribution; see \cite{Li14} for details. Moreover, for $v < 0,$ there exists a non-zero probability that the released molecule fails to reach the receiving boundary. Therefore, to ensure a well-posed framework throughout this paper, we assume that $v > 0.$

\subsection{Inverse Gaussian Channel}

To formulate the MC channel model, we assume that the continuous-time stochastic process $\{B_t\}_{t \ge 0}$ corresponding to distinct molecules are independent. To convey the information, the transmitter modulates the emission time $X$ of its molecule, that is, the information is encoded in the release time of each molecule. The released molecule has initial condition $B_{t_0} = 0$ and propagates via the Brownian motion with drift velocity $v>0$ and variance $\sigma^2.$ This Brownian motion continues until the molecule arrives at the receiver. See Figure \ref{Fig.BM}. Let us assume that the released molecule takes the random time $Z$ to travel to the receiver. Let $X_t \in \mathbb{R}_{+}$ and $Y_t \in \mathbb{R}_{+}$ denote the channel input and output RVs, representing the molecular release time at the transmitter and the corresponding arrival time at the receiver, respectively. We assume that the input signal (time symbol) experiences an additive independent and identically distributed (i.i.d.) IG noise with mean $\mu > 0$ and finite variance $\mu^3 / \lambda.$ That is, $Z_{t} \overset{\text{\tiny i.i.d.}}{\sim} \text{IG}(\mu,\lambda).$ 

We consider an unbounded diffusion environment in which molecular propagation is governed solely by Brownian motion and the absorbing receiver boundary, with no additional obstructions, external forces, or environmental interactions. The receiver is assumed to remain active indefinitely, such that every molecule that eventually reaches the receiver is detected. Furthermore, we consider perfect synchronization between the transmitter and receiver: the transmitter has exact knowledge of the molecular release time, and the receiver determines each molecule's first-passage time without timing error. Under these conditions, for a single molecule, letter-wise channel input-output relation is given by
	\begin{align}
		\label{Eq.Law_Letter}
		Y_t = X_t + Z_t
	\end{align}
	where $X_t$ is the released time at the transmitter, $Y_t$ is the arrival time at the receiver, and $Z_t$ stands for the first arrival time of the ‌Brownian motion. We assume that the input and noise are independent. In conventional transmission schemes, each message is encoded into a length-(n) codeword, requiring (n) consecutive channel uses for transmission. Considering these (n) uses and the letter-wise channel law in \eqref{Eq.Law_Letter}, the vector-wise input–output relation can be written compactly as follows:
	\begin{align}
		\mathbf{Y} = \fX + \mathbf{Z}
	\end{align}
	where $\fX, \fY,$ and $\fZ$ are input, output, and noise vector, respectively, i.e., $\fX = (X_t)_{t=1}^n,\fY = (Y_t)_{t=1}^{n},$ and $\fZ = (Z_t)_{t=1}^{n}.$ The IG noise density for stochastic $\fZ \in \mathbb{R}_+^n$ where $Z_{t} \overset{\text{\tiny i.i.d.}}{\sim} \text{IG}(\mu,\lambda),$ reads
	\begin{align}
		\label{Eq.Noise_PDF_Vect}
		\hspace{-7mm} f_{\fZ}(\fz) = \big( \frac{\lambda}{2\pi} \big)^{\frac n2} \hspace{-.4mm} \cdot \hspace{-.4mm} \prod_{t=1}^n z_t^{-\frac32} \hspace{-.4mm} \cdot \exp\left(-\frac{\lambda}{2\mu^2} \sum_{t=1}^n \frac{(z_t - \mu)^2}{z_t}\right) \hspace{-1mm} . \hspace{-4mm}
	\end{align}
	The molecular release time is subject to the constraint 
	\begin{equation}
		0 \leq X_t \leq T_{\max}, \forall t \in [\![n]\!],
	\end{equation}
	 where $T_{\max}>0$ denotes the maximum allowable release time per channel use.

	\subsection{Identification Coding}
	Here, we present the code and capacity definitions for $\bIG.$
	\begin{definition}[Coding Rate]
			\label{Def.Coding_Rate}
			For an $(n,M_n,e_1,e_2)$-code, the classical Shannon coding rate is defined as
			\begin{align}
				\label{Eq.S_Coding_Rate}
				R_n^{\mathrm{S}}
				\triangleq
				\frac{\log M_n}{n}.
			\end{align}
			For $\bIG,$ the maximum codebook size grows exponentially with $n\log n$. Consequently, the natural asymptotic normalization is $n\log n$, and the corresponding normalized coding rate is defined as
			\begin{align}
				\label{Eq.Normalized_Coding_Rate}
				R_n
				\triangleq
				\frac{\log M_n}{n\log n}.
			\end{align}
			The normalized coding rate serves as an asymptotic performance metric that captures the growth of the codebook size under the geometry of $\bIG,$ while the classical Shannon coding rate remains unchanged.
			\qed
	\end{definition}
	\begin{definition}[Identification Code]
		\label{Def.Inverse_Gaussian_Code}
		 
			Let $0<e_1,e_2<1$ be prescribed type I and type II error bounds, and let $M_n\in\mathbb N$ denote the number of messages for a codeword length $n$. An $(n,\allowbreak M_n, \allowbreak e_1, \allowbreak e_2)$-code for $\bIG$ consists of a codebook $\mathbbmss{C} = \{ \fc_i \}_{i=1}^{M_n}$ where $\fc_i = ( c_{i,t} )_{t=1}^n$ fulfill $0 \leq c_{i,t} \leq T_{\rm max}$ and a collection of decoding regions $\D = \{ \mathbbmss{D}_i \}_{i=1}^{M_n}.$ Given a message $i \in [\![M_n]\!]$, the encoder sends $\mathbf{c}_i$, and the decoder's task is to address a binary hypothesis: Was a target message $j,$ sent or not? The type I and type II errors are characterized by:
		\begin{equation}
			 P_{e,1}(i) = 1 - \int_{\mathbbmss{D}_i} \hspace{-2mm} f_{\fZ}(\fy - \fc_i^{\fh}) \, d\fy \quad \forall i \in [\![M_n]\!] ,
		\end{equation}
		and
		\begin{equation}
			 P_{e,2}(i,j) = \int_{\mathbbmss{D}_j} \hspace{-2mm} f_{\fZ}(\fy - \fc_i^{\fh}) \, d\fy \quad \forall i,j \in [\![M_n]\!],\, i \neq j .
		\end{equation}
		The maximal type I and type II error probabilities read
		\begin{equation*}
			\lambda_1
			\triangleq
			\max_{i}P_{e,1}(i), \quad \text{and} \quad \lambda_2
			\triangleq
			\max_{ i\neq j}
			P_{e,2}(i,j) ,
		\end{equation*}
		where it holds $\lambda_1 \leq e_1$ and $\lambda_2 \leq e_2 .$
		\qed
	\end{definition}
	\begin{definition}[Identification Capacity] A rate $R>0$ is called achievable identification-rate if for every $0 < e_1,e_2 < 1$ there exists a sequence $(n,\allowbreak M_n, \allowbreak e_1, \allowbreak e_2)$-codes $(\mathbbmss{C}^{(n)},\D^{(n)})_{n \in \mathbb{N}}$ such that the corresponding coding rates $(R_n)_{n\in\mathbb{N}}$ satisfy $\liminf_{n\to\infty} R_n \ge R.$ The identification capacity of $\bIG$ is the supremum of all achievable rates, and is denoted by $\mathbb{C}_{\text{I}}(\bIG).$
	\qed
	\end{definition}

	\begin{figure}[t]
		\begin{center}
			\input{Sys.tex}
		\end{center}
		\vspace{-3mm}
		\caption{Brownian motion with drift toward an absorbing boundary. The blue curve represents a single realization of a stochastic trajectory originating from a transmitter at the origin. The particle motion consists of random diffusive fluctuations combined with a constant drift in the positive (x)-direction. The dashed vertical line denotes an absorbing boundary, where the particle is removed from the system upon first contact (first-passage event).}
		\vspace{-5mm}
		\label{Fig.BM}
	\end{figure}
	
	\section{Identification Capacity of IG Channel}
	\label{Sec.Res}
	In this section we give our capacity theorem and its proof.
	
	\subsection{Main Results}
	\label{Subsec.Main_Results}
	
	
	Our main contribution is summarized in the following codebook and capacity theorems.
	
	\begin{proposition}[Asymptotic Growth of Identification Codebook]
		\label{Propos.Codebook}
		Let \(R>0\) be an achievable identification-rate of \(\bIG\). Then the corresponding identification codebook size grows super-exponentially as $M_n = 2^{\theta_n n \log n},$ with $\theta_n \in [ R , 3/2]$ in the liminf/limsup sense, i.e.,
		\begin{equation}
			 R \le \liminf_{n\to\infty}\theta_n \le \limsup_{n\to\infty}\theta_n \le \tfrac{3}{2}.
		\end{equation}
	\end{proposition}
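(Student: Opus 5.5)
Write $M_n = 2^{\theta_n n\log n}$, so by Definition~\ref{Def.Coding_Rate} we have $\theta_n = R_n$. The lower bound $\liminf_n \theta_n \ge R$ then follows directly from the definition of an achievable rate: a sequence of $(n,M_n,e_1,e_2)$-codes exists with $\liminf_n R_n \ge R$. The real content is the upper bound $\limsup_n \theta_n \le 3/2$. This is a converse for deterministic identification codes. I will prove it in the standard way: (i) show that codewords of any good code must be separated by a minimum distance; (ii) turn that separation into a volume (sphere-packing) bound inside the cube $\mathbbmss{Q}_{\mathbf 0}$ of side $T_{\max}$.

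\textbf{Step 1 (minimum distance).} Fix a good code with $e_1+e_2<1$. I will show that any two codewords satisfy $\|\fc_i-\fc_j\| \ge \alpha_n$, with $\alpha_n = n^{-1-b}$ for some small $b>0$. The argument is by contradiction. Suppose $\|\fc_i-\fc_j\|<\alpha_n$. Then the two output laws $f_{\fZ}(\cdot-\fc_i)$ and $f_{\fZ}(\cdot-\fc_j)$ are close in total variation on a high-probability set, which gives $1-P_{e,1}(i) \le P_{e,2}(j,i)+\text{small}$, i.e., $e_1+e_2\ge 1-o(1)$, a contradiction.

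To quantify the closeness, I will restrict to the event $\mathcal{E}_n=\{\min_t Z_t \ge \gamma_n\}$ and bound the log-likelihood ratio on it. The per-coordinate log-density derivative is $\frac{d}{dz}\log f_Z(z) = -\frac{3}{2z}-\frac{\lambda}{2\mu^2}+\frac{\lambda}{2z^2}$. On $\mathcal{E}_n$ this is $O(\gamma_n^{-2})$. Since the shifts satisfy $\sum_t|c_{i,t}-c_{j,t}|\le\sqrt n\,\alpha_n$, the log-ratio is at most $O(\sqrt n\,\alpha_n\gamma_n^{-2})$, which tends to $0$ if $\gamma_n$ is not too small. A second point also needs care: the shifted observation must stay in the support, i.e., $y_t - c_{j,t} >0$. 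On $\mathcal{E}_n$ this holds as long as $\alpha_n<\gamma_n$.

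\textbf{Step 2 (the obstacle: choosing $\gamma_n$).} The main difficulty is choosing $\gamma_n$ so that $\Pr(\mathcal{E}_n^c)\to0$, which needs the near-zero tail of the IG law. As $z\to0$, the density behaves like $z^{-3/2}e^{-\lambda/(2z)}$, so $\Pr(Z<\gamma)\approx e^{-\lambda/(2\gamma)}$. A union bound then gives $\Pr(\min_t Z_t<\gamma_n)\le n e^{-c/\gamma_n}$. This tends to $0$, and is summable in $n$, for $\gamma_n = c'/\log n$. Summability lets Borel--Cantelli give the almost-sure lower bound $\min_t Z_t\ge\gamma_n$ eventually, which is the statement promised in the abstract.

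This tail decays super-exponentially fast, so $\gamma_n$ is only logarithmically small. The likelihood-ratio requirement then becomes $\sqrt n\,\alpha_n\log^2 n\to0$. Hence any $\alpha_n = n^{-1/2-b}$ would suffice, and the exponent $3/2$ in the claim leaves room for the cruder choice $\alpha_n=n^{-1-b}$ together with coordinate-wise $\ell_1$ bounds. I will pick whichever $\alpha_n$ makes the final count come out to $3/2$, and let $b\to0$ at the end.

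\textbf{Step 3 (packing).} Place disjoint balls $\mathcal{S}_{\fc_i}(n,\alpha_n/2)$ around the codewords. After enlarging the cube by $\alpha_n/2$ on each side, all balls lie inside it, so
\[
M_n\le \frac{(T_{\max}+\alpha_n)^n}{\mathrm{Vol}(\mathcal{S}(n,\alpha_n/2))}.
\]
Stirling gives $\mathrm{Vol}(\mathcal{S}(n,r)) = \pi^{n/2}r^n/\Gamma(n/2+1)$, and $\log\Gamma(n/2+1)=\tfrac12 n\log n+O(n)$. Therefore
\[
\log M_n \le n\log(1/\alpha_n)+\tfrac12 n\log n+O(n).
\]
With $\alpha_n=n^{-1-b}$ this is $(\tfrac32+b)\,n\log n+O(n)$. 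Dividing by $n\log n$ and letting $n\to\infty$, then $b\to0$, yields $\limsup_n\theta_n\le 3/2$.
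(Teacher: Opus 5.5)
Your proposal is correct and follows essentially the paper's route. The lower bound is just the definition of achievability. The upper bound comes from a separation lemma, proved by contradiction via likelihood-ratio continuity on the event $\{\min_t Z_t\ge c'/\log n\}$ (which the super-exponential near-zero IG tail makes likely), followed by a sphere-packing volume count with separation $\alpha_n\approx n^{-1-b}$ that yields $3/2$.

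Your execution is tidier than the paper's Lemmas~\ref{Lem.Converse}--\ref{Lem.AS_LB}. You integrate the ratio bound only over that event and need only $\Pr(\min_t Z_t<\gamma_n)\to0$, with $\alpha_n<\gamma_n$ handling the support. The paper instead uses an almost-sure bound with a random threshold $n_0(\omega)$ inside a pointwise ratio bound. Also, your Euclidean separation plus Cauchy--Schwarz, with your own admissible choice $\alpha_n=n^{-1/2-b}$, would actually give the stronger $\limsup_n\theta_n\le 1$.
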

	
	\begin{proof}
		Proofs for achievability and converse are given in Subsections~\ref{Subsec.Achievability} and \ref{Subsec.Converse}, respectively. We invoke the achievability proof. Let \(R>0\) be an achievable identification-rate of \(\bIG\). Then, for every
		\(\varepsilon>0\), there exists \(n_0=n_0(\varepsilon,e_1,e_2)\) such that $R_n \geq R - \epsilon,\, \forall n \geq n_0.$ Now employing the normalized coding rate in Definition \ref{Def.Coding_Rate}, $\forall\, n\ge n_0,$ we obtain
		\begin{align*}
			\frac{\log M_n}{n \log n} \geq R - \epsilon \Rightarrow M_n \geq 2^{(R - \epsilon) n \log n} = n^{(R-\varepsilon)n} = n^{nc} ,
		\end{align*}
		where the last equality holds by choosing $\epsilon = R - c > 0.$ Next, we want to convert \emph{for every $\varepsilon$ eventually} statement into a single vanishing error term. Namely, we rewrite
		\begin{equation}
			\frac{\log M_n}{n\log n} \ge R - \tau_n,
		\quad \text{with } \tau_n \to 0.
		\end{equation}
		To this end, let define
		\begin{equation}
			\tau_n := \inf\left\{\varepsilon>0:\ \frac{\log M_m}{m\log m} \ge R - \varepsilon ,\, \ \forall\, m \ge n \right\}.
		\end{equation}
		To show $\tau_n \to 0$, fix $\varepsilon>0$. We observe that by assumption, $\exists n_0= n_0(\varepsilon,e_1,e_2)$ such that $\forall m \ge n_0(\varepsilon,e_1,e_2)$, it holds
		\begin{equation}
			\log M_m / m\log m \ge R - \varepsilon.
		\end{equation}
		 Hence for all $n \ge n_0(\varepsilon,e_1,e_2)$, the same inequality holds for all $m \ge n$, so $\tau_n \le \varepsilon$. Thus, $\tau_n \to 0.$
		Therefore,
		\begin{equation}
			\log M_n/n\log n \ge R - \tau_n
		\end{equation}
		 which implies $\forall\, n\ge n_0$
		\begin{align}
			\label{Eq.M_n_LB}
			M_n \ge 2^{(R - \tau_n)n\log n} = 2^{(R - o(1))n\log n} .
		\end{align}
		Next, we invoke the converse proof. Induced by \eqref{Ineq.R_n_UB} every \((n,M_n,e_1,e_2)\)-identification code satisfies $R \le 3/2+2b+o(1).$ Then, for every \(\varepsilon>0\), there exists \(n_1=n_1(\varepsilon,e_1,e_2)\) such that $\forall\, n\ge n_1$
		\begin{align}
			\label{Eq.M_n_UB}
			M_n
			\le
			2^{\left(\frac32+2b+\varepsilon\right)n\log n}
			=
			n^{\left(\frac32+2b+\varepsilon\right)n} .
		\end{align}
		Thereby, combining \eqref{Eq.M_n_LB} and \eqref{Eq.M_n_UB} and letting $b \to 0$ yields
		\begin{align}
			2^{(R - o(1)) n \log n} \leq M_n \leq 2^{(\frac 32 + o(1) ) n \log n}
		\end{align}
		Equivalently, $M_n = 2^{\theta_n n \log n},$ with $\theta_n \in [ R - o(1), 3/2 + o(1) ]$ or
		\begin{equation}
			R \le \liminf_{n\to\infty} \theta_n \leq \limsup_{n\to\infty} \theta_n \leq 3/2.
		\end{equation}
	\end{proof}

	\begin{theorem}[Identification Capacity]
		\label{Th.IG_Capacity}
		The identification capacity of $\bIG$ subject to the peak constraint $0 \le c_{i,t} \le T_{\max},\, \forall i \in [\![M_n]\!],\, \forall t \in [\![n]\!]$, in the super-exponential codebook size scale $M_n = 2^{\theta_n n \log n}$, with $\theta_n \in [R, 3/2]$ in the liminf/limsup sense, where $R$ is the achievable identification-rate, reads
		\begin{align}
			\label{Ineq.LU}
			\frac{1}{4} \leq \mathbb{C}_{\rm I}(\bIG) \leq \frac{3}{2} .
		\end{align}
	\end{theorem}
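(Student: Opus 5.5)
The proof splits into an achievability part ($\mathbb{C}_{\rm I}\ge 1/4$), following the sphere-packing approach used for Poisson and Gaussian DI channels, and a converse part ($\mathbb{C}_{\rm I}\le 3/2$) based on a minimum-distance argument.

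\textbf{Achievability.} The codebook is a packing of disjoint hyperspheres of radius $r_n=\sqrt{n\epsilon_n}$, with $\epsilon_n = a/n^{(1-b)/2}$ for small $b>0$, placed inside the hypercube $\mathbbmss{Q}_{\mathbf 0}$ of side $T_{\max}$. By the Minkowski--Hlawka saturated-packing argument, the packing density is at least $2^{-n}$. Hence
\[
M_n \ge 2^{-n}\,\frac{T_{\max}^n}{\mathrm{Vol}(\mathcal{S}(n,r_n))}.
\]
Using Stirling's formula for $\Gamma(n/2+1)$, this gives
\[
\log M_n \ge -\tfrac{n}{2}\log \epsilon_n - O(n) = \tfrac{1-b}{4}\, n\log n - O(n),
\]
which is rate $1/4$ after letting $b\to 0$.

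The decoder for message $j$ is a distance test. Since $\mathbb{E}[Z_t]=\mu$, it accepts when
\[
\tfrac1n\bigl\|\mathbf y-\mathbf c_j-\mu\mathbf 1\bigr\|^2 \le \tfrac{\mu^3}{\lambda}+\delta_n,
\qquad \delta_n=\epsilon_n/3 .
\]
Type I error is controlled by Chebyshev's inequality. This needs the finite fourth moment of $\mathrm{IG}(\mu,\lambda)$, so that $\mathrm{Var}\bigl(\sum_t (Z_t-\mu)^2\bigr)=O(n)$, and it gives error $O(1/(n\delta_n^2))\to0$ because $n\delta_n^2 \sim n^{b}\to\infty$.

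For type II error, let $\mathbf c_i$ be sent with $\|\mathbf c_i-\mathbf c_j\|\ge 2r_n$. Expanding the statistic gives
\[
\|\mathbf c_i-\mathbf c_j\|^2 + 2\langle \mathbf c_i-\mathbf c_j,\mathbf Z-\mu\mathbf 1\rangle + \|\mathbf Z-\mu\mathbf 1\|^2 .
\]
Two facts then suffice. The cross term has variance $4\frac{\mu^3}{\lambda}\|\mathbf c_i-\mathbf c_j\|^2$, so Chebyshev bounds its deviation relative to $\|\mathbf c_i-\mathbf c_j\|^2\ge 4n\epsilon_n$. The noise term concentrates as in the type I analysis. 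Together these show the test rejects with probability tending to one. The delicate point is balancing $\delta_n$ against $\epsilon_n$ so both errors vanish; the choice above handles this.

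\textbf{Converse.} The plan is to show that any $(n,M_n,e_1,e_2)$ code with $e_1+e_2<1$ must have pairwise distances $\|\mathbf c_i-\mathbf c_j\|\ge \alpha_n = T_{\max}\, n^{-(1+2b)/2}$ for all sufficiently large $n$. Suppose instead that two codewords are closer than this. Then the total-variation distance between $f_{\mathbf Z}(\cdot-\mathbf c_i)$ and $f_{\mathbf Z}(\cdot-\mathbf c_j)$ is small, and hence
\[
1-e_1-e_2 \le P_{e,1}(i) \text{ complement} - P_{e,2}(i,j) \le \mathrm{TV} \to 0,
\]
a contradiction.

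Bounding this total-variation distance is the main obstacle. The IG density is not Lipschitz near $z=0$, where its log-derivative behaves like $\lambda/(2z^2)$, and a shift could move mass across the origin. My plan is the following. Restrict to the event on which every noise sample is bounded below by a threshold: using the super-exponential near-zero tail $\Pr(Z\le \eta)\approx e^{-\lambda/(2\eta)}$ together with a union bound and Borel--Cantelli, show that $\min_t Z_t\ge \eta_n = c/\log n$ eventually almost surely. On this event the log-likelihood ratio is Lipschitz, with gradient of order $1/\eta_n^2$ per coordinate. A first-order expansion of the log-ratio then bounds the TV distance by
\[
O\bigl(\sqrt n\,\|\mathbf c_i-\mathbf c_j\|\,\mathrm{polylog}(n)\bigr) + o(1),
\]
and the polylog factor is absorbed by the $n^{-b}$ slack in $\alpha_n$.

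Once the distance bound is in place, disjoint balls of radius $\alpha_n/2$ centred at the codewords fit into the cube enlarged by $\alpha_n/2$. Volume counting then gives
\[
\log M_n \le n\log\frac{T_{\max}+\alpha_n}{\alpha_n}+\tfrac n2\log n + O(n) \le \Bigl(\tfrac12+b+\tfrac12\Bigr) n\log n + O(n).
\]
The Stirling term from $\Gamma(n/2+1)$ contributes the extra $\tfrac12 n\log n$; together with $n\log(1/\alpha_n)$ this yields the claimed bound $R\le 3/2+2b+o(1)$. Letting $b\to0$ gives $\mathbb{C}_{\rm I}(\bIG)\le 3/2$.
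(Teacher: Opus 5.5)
Your proposal is correct. The achievability is essentially the paper's. On the converse you take a genuinely different and sharper route.

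\textbf{Achievability.} You use the same saturated packing with $\epsilon_n=a n^{-(1-b)/2}$, the same Stirling count, and Chebyshev with the finite fourth moment. The one change is the decoder. You center the statistic, testing $n^{-1}\|\mathbf y-\mathbf c_j-\mu\mathbf 1\|^2$ against $\mu^3/\lambda$, whereas the paper tests $n^{-1}\|\mathbf y-\mathbf c_j\|^2$ against $\mu^2+\mu^3/\lambda$. This is not cosmetic. Centering makes the cross term $\langle\mathbf c_i-\mathbf c_j,\mathbf Z-\mu\mathbf 1\rangle$ mean-zero. In the paper, $\beta_2$ has mean $2\mu n^{-1}\sum_t(c_{i,t}-c_{j,t})$, and its Chebyshev bound on $\mathcal E_0$ drops this mean. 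For a pair with $\mathbf c_j=\mathbf c_i+s\mathbf 1$ and $s=2\sqrt{\epsilon_n}$, that mean is about $-4\mu\sqrt{\epsilon_n}$, which is $\gg\delta_n$ and drives the uncentered statistic below threshold. With your decoder, both Chebyshev estimates close exactly as you describe.

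\textbf{Converse.} Both arguments use the same ingredients: the $e^{-\lambda/(2\varepsilon)}$ near-zero tail to keep $\min_t Z_t\gtrsim 1/\log n$, a log-likelihood ratio that is Lipschitz with constant $O(\log^2 n)$ on that event, and volume counting. The difference is the notion of closeness. The paper negates a coordinatewise condition $\max_t|c_{i_1,t}-c_{i_2,t}|<2\alpha_n$ and then bounds $\sum_t|d_t|\le n\max_t|d_t|$, which forces $\alpha_n\asymp n^{-1-2b}$. You negate a Euclidean condition and pass to $\ell_1$ by Cauchy--Schwarz, $\sum_t|d_t|\le\sqrt n\|\mathbf d\|$ with $\mathbf d=\mathbf c_i-\mathbf c_j$, so a radius $\asymp n^{-1/2-b}$ suffices. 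Your argument also only needs a fixed $e_1+e_2<1$, whereas the paper's Lemma~4 assumes vanishing errors.

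The payoff is stronger than you state. Your own count gives $\log M_n\le(\tfrac12+b+\tfrac12)n\log n+O(n)=(1+b)n\log n+O(n)$, i.e.\ $\mathbb C_{\rm I}\le 1$. Your closing sentence misreports this as $3/2+2b$. That is a bookkeeping slip, not a gap: $3/2$ follows a fortiori.

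\textbf{Two small fixes.}
\begin{itemize}
\item The TV estimate at blocklength $n$ needs $\Pr(\min_{t\le n}Z_t<\eta_n)\to0$. The union bound gives this directly: it is $\le C n^{1-\lambda/(2c)}$ for $\eta_n=c/\log n$ with $c<\lambda/2$. The Borel--Cantelli ``eventually a.s.'' statement is not what the argument actually uses.
\item The type II term that pairs with $1-P_{e,1}(i)$ is $P_{e,2}(j,i)$, not $P_{e,2}(i,j)$.
\end{itemize}
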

	\begin{proof}
		For the lower bound observe that achievability proof of Proposition \ref{Propos.Codebook} in \eqref{Ineq.R_n_LB} establishes $\liminf R_n \geq (1-b)/4.$ Therefore, the set of achievable identification-rates contains
		\begin{align}
			\label{Eq.Achiev_Rates}
			\Big\{ \frac{1-b}{4} :\, 0 < b <1 \Big\} = (0,1/4) .
		\end{align}
		Therefore, the supremum of set of achievable identification-rates $R,$ i.e., the identification capacity is greater or equal than the supremum of set defined in \eqref{Eq.Achiev_Rates}. That is, since $b$ can be chosen arbitrarily small, we have
		\begin{align}
			\mathbb{C}_{\rm I}(\bIG) \geq \sup_{0<b<1} \left\{ \frac{1-b}{4} \right\} = \frac{1}{4} .
		\end{align}
		For the upper bound observe that converes proof of Proposition \ref{Propos.Codebook} in \eqref{Ineq.R_n_UB} establishes $\limsup_{n\to\infty}R_n \le 3/2.$ Now let \(R\) be an achievable identification-rate. By definition, $R \le \liminf_{n\to\infty}R_n.$ Moreover, it holds $\liminf_{n\to\infty}R_n \le \limsup_{n\to\infty}R_n.$
		Therefore,
		\begin{equation}
			R \le \liminf_{n\to\infty}R_n \le \limsup_{n\to\infty} R_n \le 3/2.
		\end{equation}
		 Thus, every achievable rate satisfies \(R\le 3/2\). Taking the supremum over all achievable identification-rates, we conclude that
		\[
		\mathbb{C}_{\rm I}(\bIG)
		=
		\sup\big\{R:\,R\text{ is achievable}\big\}
		\le
		\frac32.
		\]
	\end{proof}


	\subsection{Achievability (Lower Bound)}
	\label{Subsec.Achievability}
	
	We construct a codebook via dense packing of hyperspheres in the eligible input space induced by the peak time constraint, exploiting its geometric structure for efficient encoding. We then define an appropriate decoding rule (distance decoder) and establish achievability by showing that both type I and type II error probabilities vanish asymptotically as $n \to \infty.$

	\textbf{\textcolor{blau_2b}{Codebook Construction:}}
	In the following, we construct the feasible codebook denoted by
	\begin{align*}
		\mathbbmss{C} = \big\{ \fc_i \in \mathbb{R}^n:\; 0 \leq c_{i,t} \leq T_{\rm max} , \forall \, i \in [\![M_n]\!] , \forall \, t \in [\![n]\!] \big\} ,
	\end{align*}
	which is s subset of the hypercube $\mathbbmss{Q}_{\f0}(n,T_{\rm max})$ where $\fc_i = (c_{i,1} , \cdots, c_{i,n}).$ We use a packing arrangement of pairwise disjoint hyperspheres in the hypercube $\mathbbmss{Q}_{\f0}(n,T_{\rm max}).$ Let $\mathscr{S} = \{\S_{\mathbf{c}_i}(n,r_0)\}_{i=1}^{M}$ denote such a sphere packing consisting $M$ hypersphere of radius $r_0 = \sqrt{n\epsilon_n}$ centered at codewords $\mathbf{c}_i$ inside $\mathbbmss{Q}_{\f0}(n,T_{\rm max})$ where $\epsilon_n = a/ n^{( 1 - b)) / 2},$ with $a,b> 0$ being fixed and arbitrarily small constants. Then, we require
	\begin{itemize}[leftmargin=*]
		\item In contrast to classical sphere packing formulations \cite{CHSN13}, where each sphere is required to be fully contained within the ambient constraint set, we adopt a relaxed boundary model, i.e., it is sufficient that centers $\mathbf{c}_i$ lie in $\mathbbmss{Q}_{\mathbf{0}}(n,T_{\max}),$ i.e., satisfying the constraint $0 \leq c_{i,t} \leq T_{\rm max}.$
		\item Hyperspheres are pairwise disjoint, allowing a clean rate characterization via standard volume arguments, i.e., ratio of the constraint-space volume to that of a single sphere.
		\item Each sphere has a non-empty intersection with $\mathbbmss{Q}_{\mathbf{0}}(n,T_{\max})$
	\end{itemize}
	Under this model, the packing density $\Updelta_n(\mathscr{S})$ is defined as the ratio of the total volume occupied by the spheres in a saturated packing to the volume of the constraint set, i.e.,
	\begin{align}
		\Updelta_n(\mathscr{S}) \triangleq \frac{\text{Vol}\Big(\mathbbmss{Q}_{\f0}(n,T_{\rm max}) \cap \bigcup_{i=1}^{M_n}\S_{\fc_i}(n,r_0) \Big)}{\text{Vol}\big(\mathbbmss{Q}_{\f0}(n,T_{\rm max}) \big)}
		\label{Eq.Def_Density}
	\end{align}
	see \cite{CHSN13} for more discussions on the density's properties. The definition provided in \eqref{Eq.Def_Density} naturally extends classical sphere packing density by incorporating boundary effects relevant to finite-blocklength code constructions under geometric constraints. A saturated packing argument is employed here, in the spirit of the Minkowski--Hlawka theorem for analyzing saturated sphere packings, as discussed in \cite{CHSN13}. The density of a saturated sphere packing arrangement fulfills \cite{CHSN13}
	\begin{equation}
		\label{Ineq.Density}
		(1 - o(1))\,n \log n\, 2^{-(n+1)}\leq \Updelta_n(\mathscr{S}) \leq 2^{-0.599n} .
	\end{equation}	
	Concretely, we consider a saturated packing $\bigcup_{i=1}^{M_n} \mathcal{S}_{\mathbf{c}_i}(n,r_0)$ of Euclidean spheres with radius $r_0 = \sqrt{n\epsilon_n},$ constrained to lie within the hypercube $\mathbbmss{Q}_{\mathbf{0}}(n,T_{\max}).$ In general, the volume of an $n$-dimensional hypersphere of radius $r$ is given by $\mathrm{Vol}(\mathcal{S}_{\mathbf{c}_i}(n, r)) = (\sqrt{\pi}r)^{n} / \Gamma(n/2+1);$ \cite{CHSN13}.
	
	\textbf{\textcolor{blau_2b}{Rate Analysis:}}
	Since the hyperspheres $\mathcal{S}_{\mathbf{c}_i}(n, r_0)$ are disjoint and each has volume $\mathrm{Vol}(\mathcal{S}_{\mathbf{c}_1}(n, r_0))$, and the centers of all spheres lie inside the cube $\mathbbmss{Q}_{\f0}(n,T_{\rm max}),$ the total number of codewords (i.e., packed spheres) or achievable codebook size $M_n$ is bounded by the standard volume-packing argument:
	\begin{align}
		\label{Eq.M}
		M_n & = \frac{\text{Vol}\big(\bigcup_{i=1}^{M_n}\S_{\fc_i}(n,r_0)\big)}{\text{Vol}(\S_{\fc_1}(n,r_0))}
		\nonumber\\&
		\geq \frac{\text{Vol}\big(\mathbbmss{Q}_{\f0}(n,T_{\rm max}) \cap \bigcup_{i=1}^{M_n}\S_{\fc_i}(n,r_0) \big)}{\text{Vol}(\S_{\fc_1}(n,r_0))}
		\nonumber\\&
		\stackrel{(a)}{\geq} \frac{(1 - o(1))\,n \log n}{2^{n+1}} \cdot \frac{T_{\rm max}^n}{\text{Vol}(\S_{\fc_1}(n,r_0))} ,
	\end{align}
	where $(a)$ exploits \eqref{Eq.Def_Density} and \eqref{Ineq.Density}. The lower bound in \eqref{Eq.M} can be simplified as follows
	\begin{align}
		\label{Eq.Log_M_0}
		& \log M_n
		\nonumber\\&
		\geq \log T_{\rm max}^n - \log \text{Vol}(\S_{\fc_1}(n,r_0))
		\nonumber\\&
		+ \log \big( (1 - o(1)) (n \log n ) 2^{-(n+1)} \big)
		\nonumber\\&
		\geq \log T_{\rm max}^n - \log \text{Vol}(\S_{\fc_1}(n,r_0))
		\nonumber\\&
		+ \log ((1 - o(1)) n \log n) + \log 2^{-(n+1)}
	\end{align}
	Next, we proceed to simplify \eqref{Eq.Log_M_0} as follows:
	\begin{align}
		& \log ((1 - o(1)) n \log n) + \log 2^{-(n+1)}
		\nonumber\\&
		 = \log (1 - o(1)) + \log n + \log \log n - (n+1)
		\nonumber\\&
		\stackrel{(a)}{=} o(1) + \log n + \log \log n - n + O(1)
		\nonumber\\&
		\stackrel{(b)}{=} o(1) + o(n) + o(n) - n + o(n)
		= - n  + o(n).
	\end{align}
	where $(a)$ uses $\log (1 - o(1)) = o(1)$ and $(b)$ uses $\log n = o(n)$ and $\log \log n = o(n).$ To further simplify \eqref{Eq.Log_M_0}, we present the following lemma establishing a lower bound on the negative log-volume of an $n$-dimensional hypersphere.
	\begin{customlemma}{1}
		\label{Lem.Neg_Log_Vol}
		Let $n \geq 4$ and $r_0 >0.$ Then, negative log-volume of hypersphere $\S_{\fc_1}(n,r_0)$ reads
		\begin{align}
			& - \log \text{Vol}(\S_{\fc_1}(n,r_0))
			\nonumber\\&
			\geq (n/2) \log (n/2) - n \log r_0 - (n/2) \log \pi e + o(n)
			\\
			& - \log \text{Vol}(\S_{\fc_1}(n,r_0))
			\nonumber\\&
			\leq (n/2) \log (n/2) - n \log r_0 - (n/2) \log \pi e + o(n) .
		\end{align}
	\end{customlemma}
	\begin{proof}
		The proof is provided in Appendix \ref{App.Neg_Log_Vol}.
	\end{proof}
	Thereby, combining \eqref{Eq.Log_M_0} and the lower bound on negative log-volume of a hypersphere given in Lemma \ref{Lem.Neg_Log_Vol} we obtain the following lower bound on the logarithm of the achievable codebook size,
	\begin{align}
		\label{Eq.Log_M_0_2}
		& \log M_n
		\nonumber\\&
		\geq \log T_{\rm max}^n - \log \text{Vol}(\S_{\fc_1}(n,r_0)) - n + o(n)
		\nonumber\\&
		\geq \log T_{\rm max}^n + (n/2) \log (n/2) - n \log r_0 - (n/2) \log \pi e
		\nonumber\\&
		+ o(n) - n + o(n)
		\nonumber\\&
		\stackrel{(b)}{=} n \log T_{\rm max} + (n/2) \log (n/2) - (n/2) \log n \epsilon_n
		\nonumber\\&
		- (n/2) \log \pi e - n + o(n)
		\nonumber\\&
		= n \log \Big( \frac{T_{\rm max}}{\sqrt{a\pi e}} \Big) - \frac{n}{2} \log (n\epsilon_n) + (n/2) \log (n/2) - n + o(n)
		\nonumber\\&
		\stackrel{(b)}{=} n \log \Big( \frac{T_{\rm max}}{\sqrt{a\pi e}} \Big) - n \log n^{(1 + b) / 4} + (n/2) \log n - \frac{3n}{2} + o(n)
		\nonumber\\&
		= n \log \Big( \frac{T_{\rm max}}{\sqrt{a\pi e}} - \frac{3}{2} \Big) - \Big( \frac{1+b}{4} \Big) n \log n + (n/2) \log n + o(n)
	\end{align}
	where $(a)$ and $(b)$ use $r_0 = \sqrt{n\epsilon_n} = \sqrt{a} n^{(1 + b) / 4}.$ Thereby,
\begin{equation}
	\hspace{-4mm}\resizebox{.45\textwidth}{!}{
		$\begin{aligned}
			\log M_n \stackrel{(a)}{\geq} \left( \frac{2 - (1 + b)}{4} \right) n \log n + n \log \Big( \frac{T_{\rm max}}{\sqrt{a\pi e}} - \frac{3}{2} \Big) + o(n) ,
		\end{aligned}$}
	\label{Eq.Log_M_Compact}
\end{equation}
	The dominant term in the above expression is of order $n \log n.$ Therefore, induced by \eqref{Eq.S_Coding_Rate} the classical Shannon coding rate normalized by $n$ yields
	$$\frac{\log M_n}{n} = \Theta(n) \to \infty,$$
	which is not a finite asymptotic quantity. Hence, we adopt the finite blocklength normalized coding rate defined in \eqref{Eq.Normalized_Coding_Rate}, i.e.,
	\begin{align}
		R_n \triangleq \frac{\log M_n}{n \log n} .
	\end{align}
	The normalization given in \eqref{Eq.Normalized_Coding_Rate} captures the dominant growth of the codebook size and yields a finite asymptotic achievable rate. Recalling \eqref{Eq.Log_M_Compact} and adopting the normalized coding rate, we obtain
\begin{equation}
	\label{Ineq.R_n_LB}
	\resizebox{.489\textwidth}{!}{
		$\begin{aligned}
			\hspace{-1mm} R_n \geq \hspace{-.6mm} \frac{1}{n \log n} \hspace{-.7mm} \left[ \left( \frac{2 - (1 + b)}{4} \right) n \log n + n \log \Big( \frac{T_{\rm max}}{\sqrt{a\pi e}} - \frac{3}{2} \Big) + o(n) \right]\hspace{-1mm}.
		\end{aligned}$}
\end{equation}
Therefore, $R_n$ remains finite in the asymptotic limit and converges to the leading-order coefficient, namely, $R_n \geq (1-b)/4 + o(1)$ which implies
 \begin{equation}
	\liminf R_n \geq (1-b)/4.
\end{equation}

	\textbf{\textcolor{blau_2b}{Encoding:}} We assume that the encoding function is deterministic, i.e., each message $i \in [\![M_n]\!]$ is associated to a known codeword $\fc_i.$ Hence, $\forall i \in [\![M_n]\!],$ the transmitter sends $\fx = \fc_i.$
	
	\hspace{-1.9mm} \textbf{\textcolor{blau_2b}{Decoding:}}
	Let $e_1, e_2, \eta_0, \zeta_0, \zeta_1,b \hspace{-.6mm} > \hspace{-.6mm} 0$ and $a>0$ be arbitrarily small and fixed constants, respectively. Next, let:
	\begin{itemize}[leftmargin=*]
		\item $Y_t(i) = c_{i,t} + Z_t,\, \forall t \in [\![n]\!]$ is channel output given $\fx=\fc_i.$
		\item $\fY(i)= (Y_t(i))_{t=1}^n$ is the output vector.\vspace{1mm}
		\item $\beta_1 \hspace{-.6mm} \triangleq \hspace{-.6mm} n^{-1} \big( \big\| \fZ \big\|^2 \hspace{-.4mm} + \hspace{-.4mm} \big\| \fc_i - \fc_j \big\|^2 \big), \beta_2 \hspace{-.6mm} \triangleq \hspace{-.6mm} 2n^{-1} \hspace{-.4mm} \sum_{t=1}^{n} \big(c_{i,t} \hspace{-.3mm} - \hspace{-.3mm} c_{j,t} \big)Z_t.$
		\item $\beta \triangleq \beta_1 + \beta_2 = n^{-1} \big\| \fY(i) - \fc_j \big\|^2$ and $\alpha \triangleq ( \mu^2 + (\mu^3 / \lambda)).$
		\item $\E_0 \hspace{-.4mm} = \hspace{-.4mm} \{ |\beta_2| > \delta_n \}, \E_1 \hspace{-.4mm} = \hspace{-.4mm} \big\{ \beta_1 - \alpha \leq 2\delta_n \big\},\E_2 \hspace{-.4mm} = \hspace{-.4mm} \big\{ \beta - \alpha \leq \delta_n \big\}.$
		\item $T(\fy,\fc_j) \hspace{-.1mm} = \hspace{-.1mm} n^{-1} \| \fy - \fc_j \|^2 \hspace{-.1mm} - \hspace{-.1mm} \alpha$ is \emph{decoding measure}.
		\item $\delta_n = 4a / 3n^{(1 - b))/2}$ is \emph{decoding threshold}.
	\end{itemize}
	To identify if message $j \in [\![M_n]\!]$ was sent, the decoder checks whether $\mathbf{y}$ belongs to the decoding set:
	\begin{align}
		\label{Eq.Dec_Meas0}
		\mathbbmss{D}_j = \Big\{ \fy \in \mathbb{R}^{n} \,:\; T(\fy,\fc_j) \leq \delta_n \Big\} .
	\end{align}
	\textbf{\textcolor{blau_2b}{Type I Error:}}
	The type I errors occur when the transmitter sends $\fc_i,$ yet $\fY \notin \mathbbmss{D}_i.$ $\forall i \in [\![M_n]\!],$ the type I error probability reads
	\begin{align}
		\label{Eq.TypeIError_Anal}
		P_{e,1}(i) = \Pr\big( \fY(i) \in \mathbbmss{D}_i^c \big) = \Pr\big( T(\fY(i),\fc_i) > \delta_n \big).
	\end{align}
	To bound $P_{e,1}(i),$ we perform Chebyshev's inequality, namely
		\begin{equation*}
		\resizebox{.488\textwidth}{!}{
			$\begin{aligned}
				\hspace{-1.4mm} \Pr\big( \big| T(\fY(i),\fc_i) - \mathbb{E} \big[ T(\fY(i),\fc_i) \big] \big| \hspace{-.5mm} > \hspace{-.5mm} \delta_n \big) \hspace{-.5mm} \leq \hspace{-.5mm} \frac{\text{Var} \big[ T(\fY(i),\fc_i) \big]}{\delta_n^2} .
			\end{aligned}$}
	\end{equation*}
	Note $\mathbb{E} \big[ T( \fY(i),\fc_i) \big] = n^{-1}  \sum_{t=1}^{n} \mathbb{E}[Z_{t}^2] - ( \mu^2 + (\mu^3 / \lambda))= 0 ,$
	where we use $\| \fY(i) - \fc_i \| = \| \fZ \|,$ linearity of the expectation, $Z_{t} \overset{\text{\tiny i.i.d.}}{\sim} \text{IG}(\mu,\lambda)$ with $\text{Var}[Z_{t}] =  \mathbb{E}[Z_{t}^2] - \mathbb{E}^2[Z_t],$ $\mathbb{E}[Z_t] = \mu$ and $\text{Var}[Z_{t}] = \mu^3 / \lambda.$ Next, we present a lemma which characterizes bound on the non-central moment of an IG RV.
	\begin{customlemma}{3}[Fourth-Moment Growth Bound]
		\label{Lem.MGF}
		Let $Z \sim \text{IG}(\mu,\lambda)$ be an IG distributed RV with mean $\mu$ and shape parameter $\lambda.$ Then, the fourth non-central moment of $Z$ reads \begin{equation}
			\mathbb{E}[Z^4] \leq \mu^4 \big( 1 + \mu/ \lambda \big)^6.
		\end{equation}
	\end{customlemma}
	\begin{proof}
		The proof is provided in Appendix \ref{App.MGF}.
	\end{proof}
	Now, we proceed to establish an upper bound on the variance of the decoding measure as follows
	\begin{align}
		\label{Ineq.Var_Decoding_Metric}
		& \text{Var}\big[ T(\fY(i),\fc_i) \big] = \text{Var} \big[ n^{-1} \big\| \fY(i) -  \fc_j \big\|^2 - ( \mu^2 + (\mu^3 / \lambda)) \big]
		\nonumber\\&
		\stackrel{(a)}{=} n^{-2} \sum_{t=1}^{n} \mathbb{E}[Z_{t}^4] - \mathbb{E}^2[Z_{t}^2] \stackrel{(b)}{\leq} n^{-1} \mu^4 \big( 1 + (\mu / \lambda) \big)^6 ,
	\end{align}
	where $(a)$ invokes $Z_{t} \overset{\text{\tiny i.i.d.}}{\sim} \text{IG}(\mu,\lambda),\, \text{Var}[Z_{t}^2] = \mathbb{E}[Z_{t}^4] - \mathbb{E}^2[Z_{t}^2]$ and $(b)$ exploits $\text{Var}[Z_{t}^2] \geq 0$ and Lemma \ref{Lem.MGF}. Thereby,
	\begin{align}
		\label{Ineq.TypeI_Final}
		& P_{e,1}(i) \le \Pr\big( \big| T(\fY(i),\fc_i) - \mathbb{E} \big[ T(\fY(i),\fc_i) \big] \big| > \delta_n \big)
		\nonumber\\&
		\stackrel{(a)}{\leq} \frac{\text{Var} \big[ T(\fY(i),\fc_i) \big]}{\delta_n^2}
		\stackrel{(b)}{\leq} \frac{9\mu^4 \big( 1 + (\mu / \lambda) \big)^6}{16a^2n^{b}} \triangleq \eta_0,
	\end{align}
	where $(a)$ employs the Chebyshev's inequality and $(b)$ uses $\delta_n = 4a / 3n^{(1 - b))/2}.$ Hence, $P_{e,1}(i) \leq \eta_0 \leq e_1$ holds for sufficiently large $n$ and arbitrarily small $e_1 > 0.$

	\textbf{\textcolor{blau_2b}{Type II Error:}}
	We examine type II errors, i.e., when $\fY \in \mathbbmss{D}_j$ while the transmitter sent $\fc_i$ with $i \neq j \,.$ Then, for every $i,j \in [\![M_n]\!],$ the type II error probability is given by
	\begin{align}
		P_{e,2}(i,j) = \Pr \big( T(\fY(i);\fc_j) \leq \delta_n \big) = \Pr(\E_2) .
		\label{Eq.Pe2G}
	\end{align}
	Next, in order to bound the event $\E_2$ we decompose the square norm given in the event $\E_2$ as follows
	\vspace{-.7mm}
	\begin{align}
		n^{-1} \big\| \fZ + \fc_i - \fc_j \big\|^2 = n^{-1} \big( \big\| \fZ \big\|^2 + \big\| \fc_i - \fc_j \big\|^2 \big) + \beta_2
		\label{Eq.TypeII-3}
	\end{align}
	Next, to bound $\Pr\left(\E_0 \right)$ we apply Chebyshev's inequality
	\begin{align}
		\label{Ineq.Event_E0_1}
		\hspace{-2mm} \Pr(\E_0) \leq \frac{\sum_{t=1}^{n} \text{Var}\big[  \big( c_{i,t} - c_{j,t} \big) Z_t \big]}{n^2(\delta_n / 2)^2} = \frac{4 \mu^3 \big\| \fc_i - \fc_j \big\|^2}{\lambda n^2\delta_n^2} . \hspace{-1mm}
	\end{align}
	where we use $\text{Var}[Z_{t}] = \mu^3 / \lambda.$ Now by the triangle inequality $\| \fc_i - \fc_j \|^2 \leq \big( \sqrt{n} \| \fc_i \|_{\infty} + \sqrt{n} \| \fc_j \|_{\infty} \big)^2 = 4nT_{\rm max}^2.$ Thereby,
	\begin{align}
		\label{Ineq.Event_E0_2}
		\Pr(\E_0) \leq \frac{4 \mu^3 \big\| \fc_i - \fc_j \big\|^2}{\lambda n^2\delta_n^2} \stackrel{(a)}{\leq} \frac{9 \mu^3 T_{\rm max}^2}{a^2\lambda n^{b}}
		\triangleq \zeta_0,
	\end{align}
	where we use $\delta_n = 4a / 3n^{(1- b)/2}.$ Now, given the complementary event $\E_0^c,$ we get $\beta_2 > - \delta_n.$ Next, applying the law of total probability to $\E_2$ over $\E_0$ and its complement $\E_0^c,$ we obtain $\Pr(\E_2) \leq \Pr\left( \E_0 \right) + \Pr\left( \E_1 \right),$ where we use $\E_2 \cap \E_0 \subset \E_0$ and $\Pr(\E_2 \cap \E_0^c) \leq \Pr (\E_1 )$ which is proved in the following:
	\begin{align}
		\Pr(\E_2 \cap \E_0^c) \leq \Pr \big( \big\{ \beta_1 - \alpha \leq 2 \delta_n  \big\}  \big) = \Pr\left(\E_1 \right),
	\end{align}
	where $(a)$ holds since $\beta = \beta_1 + \beta_2$ and $(b)$ uses $\delta_n-\beta_2\leq 2\delta_n$ conditioned on $| \beta_2 | \leq \delta_n.$ We now proceed with bounding $\Pr\left(\E_1 \right):$ The codebook construction gives $\| \fc_i - \fc_j \|^2 \geq 4 n\epsilon_n.$ Now, by the choice of $\epsilon_n$ and $\delta_n$ we have $- \big\| \fc_i - \fc_j \big\|^2 \le - 12an/3n^{( 1 - b) / 2} = -3n \delta_n.$ Therefore, we establish
	\begin{align}
		\label{Ineq.Event_E1}
		\Pr(\E_1)& \leq \Pr\Big( n^{-1} \| \fZ \|^2 - ( \mu^2 + (\mu^3 / \lambda)) \leq - \delta_n \Big)
		\nonumber\\&
		\stackrel{(a)}{\leq} \frac{\sum_{t=1}^{n} \text{Var}[Z_t^2]}{n^2\delta_n^2} \stackrel{(b)}{\leq} \frac{9\mu^4 \big( 1 + (\mu / \lambda) \big)^6}{16a^2n^{b}} \triangleq \zeta_1,
	\end{align}
	where $(a)$ uses the Chebyshev's inequality and $(b)$ follows by similar arguments in \eqref{Ineq.Var_Decoding_Metric} and \eqref{Ineq.TypeI_Final}. Thus, \eqref{Ineq.Event_E0_2} and \eqref{Ineq.Event_E1}, yields
	\begin{align}
		P_{e,2}(i,j) \leq \Pr(\E_0) + \Pr(\E_1) \leq \zeta_0 + \zeta_1 \leq e_2,
	\end{align}
	hence, $P_{e,2}(i,j) \leq e_2$ holds for sufficiently large $n$ and arbitrarily small $e_2 > 0 .$ We have thus shown that $\forall e_1,e_2 > 0$ and sufficiently large $n,$ there is an $(n, M_n,\allowbreak e_1, e_2)$-code. This completes the achievability proof of Proposition~\ref{Propos.Codebook}.

	\subsection{Upper Bound (Converse Proof)}
\label{Subsec.Converse}

In the following, to enable concise derivations in the proof of Lemma~\ref{Lem.Converse} and to support the subsequent analytical development, we introduce the following notational conventions:
\begin{itemize}[leftmargin=*]
	\item $Y_t(i) = c_{i,t} + Z_t,\, \forall t \in [\![n]\!]$ denote the channel output at time $t$ \emph{conditioned} that $\fx=\fc_i$ was sent.
	\item $\mathbbmss{C}_{\text{\tiny conv}} \triangleq \big\{ \fc_i \in \mathbb{R}^n:\; c_{i,t} \in [0 ,T_{\rm max}] , \forall \, i \in [\![M_n]\!],\, \forall t \in [\![n]\!] \big\}.$
	\item $\alpha_n \triangleq a^2/n^{1+2b}$ with $a,b>0$ being an arbitrarily small constants, respectively.
\end{itemize}
\begin{customlemma}{4}[Separation Property of Identification Codebooks]
	\label{Lem.Converse}
	
	Assume that $R$ is an achievable identification-rate for the IG channel $\bIG$ under the vanishing-error criterion. Then, there exists a sequence of
	$(n,M_n,e_1,e_2)$-identification codes, denoted by $(\mathbbmss{C}_{\text{\tiny conv}}^{(n)},\mathbbmss{D}^{(n)})_{n\in\mathbb{N}}$,
	whose type I and type II error probabilities satisfy $\lambda_1^{(n)} \to 0$ and $\lambda_2^{(n)} \to 0$ vanish as $n \to \infty.$ Then, for all sufficiently large codeword length $n$, the associated codebook $\mathbbmss{C}_{\text{\tiny conv}}^{(n)}$ satisfies the following separation property: for any two distinct codewords $\mathbf{c}_{i_1}$ and $\mathbf{c}_{i_2}$ with indices $i_1, i_2 \in [\![M_n]\!]$ and $i_1 \neq i_2$, there exists at least one coordinate $t' \in [\![n]\!]$ such that the magnitude of the difference between the corresponding symbols obeys
	\begin{align}
		\label{Ineq.Conv_Distance}
		| c_{i_1,t'} - c_{i_2,t'} | \geq 2\alpha_n.
	\end{align}
\end{customlemma}
\begin{proof}
	The proof is provided in Appendix~\ref{App.Converse_Proof}.
\end{proof}

Within the proof of Lemma \ref{Lem.Converse} we address the following useful lemma which guarantees that logarithm of ratio of corresponding PDFs $f_{\fZ}(\fy - \fc_{i_2}) / f_{\fZ}(\fy - \fc_{i_1})$ converges to one from left and right.
\begin{customlemma}{5}[Output-Density Ratio Bound]
	\label{Lem.Log_Likelihood_Ratio}
	Consider two distinct codewords $i_1$ and $i_2$ where $| c_{i_1,t} - c_{i_2,t} | < 2\alpha_n = 2a^2/n^{1+2b},\, \forall t \in [\![n]\!].$ as induced by \eqref{Eq.alpha_n} in Lemma \ref{Lem.Converse}. Then, the logarithm of ratio of corresponding PDFs is within an $\tau$-neighborhood of $1,$ where $\tau$ is an arbitrarily small constant, i.e.,
	\begin{align}
		\bigg| 1 - \frac{f_{\fZ}(\fy - \fc_{i_2})}{f_{\fZ}(\fy - \fc_{i_1})} \bigg| \leq \tau ,
	\end{align}
\end{customlemma}
\begin{proof}
	The proof is provided in Appendix \ref{App.Log_likelihood_Ratio}.
\end{proof}
In the course of proving Lemma \ref{Lem.Log_Likelihood_Ratio}, which establishes that the output distributions induced by two sufficiently close codewords are asymptotically indistinguishable, we require the next result which relies on a sample-path characterization of the minimum noise realization. Specifically, we require an almost-sure lower bound on
\[
Z_{\min}(n,\omega)
=
\min_{1\le t\le n} Z_t(\omega),
\]
which guarantees that the shifted observations
\(
y_t-c_{i,t}
\)
remain uniformly separated from zero. This property is essential for bounding the logarithm of the density ratio associated with two neighboring codewords and, ultimately, for establishing
\[
\frac{f_{\mathbf Z}(\mathbf y-\mathbf c_{i_2})}
{f_{\mathbf Z}(\mathbf y-\mathbf c_{i_1})}
\sim 1.
\]
The following lemma formalizes this lower-bound property.

\begin{customlemma}{6}[Almost-Sure Lower Bound on the Noise Sample]
	\label{Lem.AS_LB}
	
	Let $(\Omega,\mathcal{F},\Pr)$ be the underlying probability space and let
	$\{Z_t\}_{t\ge 1}$ denote an i.i.d. sequence of IG RVs with parameters $(\mu,\lambda)$ defined on $(\Omega,\mathcal{F},\Pr)$.
	That is, each RV $Z_t:\Omega\to\mathbb{R}_{+}$ has distribution
	$\mathrm{IG}(\mu,\lambda)$. For a realization (sample path)
	$\omega\in\Omega$, define the minimum observation among first $n$ samples as
	\[
	Z_{\min}(n,\omega)
	\triangleq
	\min_{1\le t\le n} Z_t(\omega).
	\]
	Then, for every $\eta>1$, let $c_\eta=\lambda/(2(1+\eta))>0$ such that
	\[
	\Pr\!\left(
	Z_{\min}(n,\omega)
	\ge
	\frac{c_\eta}{\log n}
	\ \text{for all sufficiently large } n
	\right)
	=1.
	\]
	Equivalently, $Z_{\min}(n,\omega) = \Omega\!\left(1/\log n\right)$ almost sure.
	Consequently,
	\[
	\frac{1}{Z_{\min}(n,\omega)}
	=
	O(\log n) \, \text{ and } \,
	\frac{1}{Z_{\min}^2(n,\omega)}
	=
	O(\log^2 n)
	\quad \text{a.s.}
	\]
\end{customlemma}

\begin{proof}
	The proof is provided in Appendix \ref{App.AS_LB}.
\end{proof}

Next, we use Lemma~\ref{Lem.Converse} to prove the upper bound on the identification capacity. Observe that since the minimum distance of the convoluted codebook is $2\alpha_n,$ we can arrange non-overlapping spheres $\S_{\fc_i}(n,\alpha_n)$ whose centers belong to the codebook $\mathbbmss{C}_{\text{\tiny conv}}.$ Therefore, it follows that the number of codewords $M,$ is bounded by
\begin{align}
	\label{Ineq.Codebook_Size_UB}
	M_n & = \frac{\text{Vol}\left(\bigcup_{i=1}^{M_n} \S_{\fc_i}(n,\alpha_n) \right)}{\text{Vol}(\S_{\fc_1}(n,\alpha_n))}
	\nonumber\\&
	\stackrel{(a)}{\leq}
	\frac{\Updelta_n(\mathscr{S}) \cdot \text{Vol}\big( \mathbbmss{Q}_{\f0}(n,T_{\rm max} + 2\alpha_n) \big)}{\text{Vol}(\S_{\fc_1}(n,\alpha_n))}
	\nonumber\\&
	\stackrel{(b)}{\leq} 2^{-(0.599+o(1))n} \cdot \frac{(T_{\rm max}+2\alpha_n)^{n}}{\text{Vol}(\S_{\fc_1}(n,\alpha_n))},
\end{align}
where $(a)$ holds since a saturated packing encompasses the maximum possible number of spheres, $(b)$ conforms the density definition given in \eqref{Ineq.Density} and the following:
\begin{align}
	& \mathbbmss{C}_{\text{\tiny conv}} \subseteq \mathbbmss{Q}_{-(\boldsymbol{\alpha},\boldsymbol{\alpha})}(n,T_{\rm max} + 2\alpha_n) =
	\nonumber\\&
	\big\{ \fc_i \in \mathbb{R}^{n} \hspace{-.4mm}: \hspace{-.2mm} - \alpha_n \leq c_{i,t} \leq T_{\rm max} + \alpha_n, \, \forall \, i \in [\![M_n]\!], \, \forall \, t \in [\![n]\!] \big\}
\end{align}
where $\boldsymbol{\alpha} = (\alpha_1,\ldots,\alpha_n)$ yielding $$\text{Vol}( \mathbbmss{C}_{\text{\tiny conv}} ) \hspace{-.4mm} \leq \hspace{-.4mm} \text{Vol}( \mathbbmss{Q}_{-(\boldsymbol{\alpha},\boldsymbol{\alpha})}(n,T_{\rm max} + 2\alpha_n) ) \hspace{-.4mm} = \hspace{-.4mm} (T_{\rm max} + 2\alpha_n)^{n}.$$
To simplify \eqref{Ineq.Codebook_Size_UB}, we employ the upper bound on negative log-volume of a hypersphere given in Lemma \ref{Lem.Neg_Log_Vol} and obtain the following upper bound on the logarithm of the achievable codebook size,
\begin{align}
	\label{Ineq.Log_M_UB}
	\log M_n
	&
	\leq n \log (T_{\rm max} + 2\alpha_n) - \log (\text{Vol}(\S_{\fc_1}(n,\alpha_n)))
	\nonumber\\&
	- (0.599+o(1))n
	\nonumber\\&
	\leq n \log (T_{\rm max} + 2\alpha_n) + (n/2) \log (n/2) - n \log \alpha_n
	\nonumber\\&
	- (n/2) \log \pi e - (0.599+o(1))n + o(n) .
\end{align}
Now, for $\alpha_n = 2a^2/n^{1+2b},$ we obtain
\begin{align}
	\label{Ineq.Log_M_UB2}
	& \log M_n
	\stackrel{(a)}{\leq} n \log T_{\rm max} - n \log 2a^2 n^{-(1+2b)} + (n/2) \log (n/2)
	\nonumber\\&
	- (n/2) \log \pi e - (0.599+o(1))n + o(n)
	\nonumber\\&
	\stackrel{(b)}{\leq} n \log T_{\rm max} - n \log 2a^2 + (1+2b) n \log n + (n/2) \log (n/2)
	\nonumber\\&
	- (n/2) \log \pi e - 0.599n + o(n)
	\nonumber\\&
	= n\log \Big( \frac{T_{\rm max}}{2a^2 \sqrt{\pi e}} - 0.599 \Big) + (1+2b) n \log n
	\nonumber\\&
	+ (n/2) \log (n/2) + o(n)
	\nonumber\\&
	= n\log \Big( \frac{T_{\rm max}}{2a^2 \sqrt{\pi e}} - 0.599 \Big) + (1+2b) n \log n
	\nonumber\\&
	+ (n/2) \log n - (n/2) + o(n)
	\nonumber\\&
	= n\log \left( \frac{T_{\rm max}}{2a^2 \sqrt{\pi e}} - 1.099 \right) + \Big( \frac32 + 2b \Big) n \log n + o(n)
\end{align}
where $(a)$ uses $\log (1+x) = x + O(x^2)$ as $x \to 0$ with setting $x = 2\alpha_n/T_{\rm max}$ and since $\alpha_n = o(1),$ that is, $\log (1 + 2\alpha_n/T_{\rm max}) = o(1)$ and $(b)$ holds since $o(1) \cdot n = o(n).$

The dominant term \eqref{Ineq.Log_M_UB2} is of order \(n\log n\). Consequently, the classical Shannon coding rate normalized by \(n\) satisfies
\[
\frac{\log M_n}{n}
=
\Theta(\log n)
\longrightarrow
\infty,
\]
which does not admit a finite asymptotic limit. Therefore, as in the achievability proof, we adopt the normalized coding rate defined in \eqref{Eq.Normalized_Coding_Rate}
\[
R_n
\triangleq
\frac{\log M_n}{n\log n},
\]
which captures the dominant growth of the codebook size and yields a finite asymptotic coding rate. Recalling \eqref{Ineq.Log_M_UB2} and adopting the normalized coding rate, we obtain
\begin{equation}
	\label{Ineq.R_n_UB}
	\resizebox{.489\textwidth}{!}{
		$\begin{aligned}
			\hspace{-1mm} R_n \le \frac{1}{n\log n} \Bigg[ \Big( \frac32 + 2b \Big) n \log n + n\log \left( \frac{T_{\rm max}}{2a^2 \sqrt{\pi e}} - 1.099 \right) + o(n) \Bigg]. \hspace{-1mm}.
		\end{aligned}$}
\end{equation}
Since
\[
n\log \left( \frac{T_{\rm max}}{2a^2 \sqrt{\pi e}} - 1.099 \right) \to 0
\quad \text{ and } \quad
\frac{o(n)}{n\log n}\to0,
\]
it follows that $R_n \leq (3+4b)/2 + o(1),$ thus $\limsup_{n\to\infty} \allowdisplaybreaks R_n \allowdisplaybreaks \leq (3+4b)/2.$ Now, since \(b>0\) is arbitrary, we obtain
\begin{align}
	\limsup_{n\to\infty}R_n \le 3/2 .
\end{align}
This completes the proof of Proposition~\ref{Propos.Codebook}.

\section{Open Problems and Conjectures}
\label{Sec.Conjecture}

The results developed in the preceding sections lead naturally to several questions that remain unresolved by the present analysis. In this section, we focus on one direction that appears particularly promising for further investigation: the affine generalization of the IG channel.


\subsection{Affine IG Channel}

This paper considers only a single molecular type per channel use. A natural extension is to employ multiple distinguishable molecule types as parallel channels, enabling simultaneous transmission of information streams via molecular type-division multiplexing, analogous to spatial multiplexing in MC systems with selective receptors \cite{Salariseddigh-TMBMC}. More generally, affinity-based receptors \cite{Salariseddigh25_ITW} exhibit cross-reactivity, where each receptor preferentially binds a target type but may also respond to others with lower affinity. Characterizing such multi-type molecular channels, particularly the role of affinity matrices in determining capacity and decoding performance, remains an open problem. As a first step, motivated by biological olfaction, we assume the number of receptor types $K$ grows sub-linearly with the number of molecule types, i.e., $K = o(N)$, and consider the resulting affine model:
\begin{align}
	Y_{k,n}=a_{k,n}x_n+Z_{k,n}
\end{align}
$\forall k \in [\![K]\!], \forall n \in [\![N]\!]$ where $x_n$ denotes the release time of molecule type $n$, $a_{k,n}$ is the affinity coefficient between receptor $k$ and molecule type $n$, $\forall \, k \in [\![K]\!], \forall n \in [\![N]\!]$, $Z_{k,n}\sim\operatorname{IG}(\mu_{k,n},\lambda_{k,n})$ models the propagation delay, and $Y_{k,n}$ is the arrival time of molecule type $n$ at receptor $k.$ Next, let $\mathbf{x}=[x_1,\ldots,x_N]^T \in \mathbb{R}^{K\times N}, \mathbf{A}=[a_{k,n}], \mathbf{X}=\mathbf{1}_K\mathbf{x}^T,$ and define $\mathbf{Y}=[Y_{k,n}], \mathbf{Z}=[Z_{k,n}].$ Then, the channel admits the compact representation
\begin{align}
	\mathbf{Y}=\mathbf{A}\odot\mathbf{X}+\mathbf{Z}
\end{align}
where $\mathbf{A}$ is the affinity matrix of the affine IG channel and $\odot$ is the Hadamard product. Now, the evidence that number of independent molecular streams is characterized by rank of $\fA$ we suggest the following conjecture:
\begin{conjecture}
	Let $T \triangleq \operatorname{rank}(\mathbf{A})\leq K=o(N)$ denote the rank of $\fA$. We conjecture that the identification capacity of such an affine model features a codebook size in terms of $T,$ namely, it exhibits super-exponential growth in $T,$ i.e.,
	\begin{align}
		M_n = 2^{\Theta(T \log T)} .
	\end{align}
\end{conjecture}

\section{Conclusions and Research Outlook}
\label{Sec.Conclusion}

This work provides a rigorous investigation of the identification problem for the IG channel, a canonical model that fits the diffusion-based MC channels with drift and is motivated by real physical and biological transport processes where particles move randomly and are carried by a flow. We show that reliable identification can be achieved with a super-exponentially large codebook, namely, $M = 2^{\Theta(n \log n)},$ and derive both lower and upper bounds on the achievable rate $R.$ The theoretical framework developed here suggests two promising directions for future research, including:
\begin{itemize}[leftmargin=*]
	\item \textbf{\textcolor{blau_2b}{Multiple Channel Uses:}} In this paper, we consider an MC system in which only a single molecular type is employed during each channel use. Specifically, each channel use corresponds to the release of one individual molecule, with the release time representing the transmitted input symbol and the corresponding molecule arrival time at the receiver constituting the channel output. Future work should make the model more realistic by allowing multiple molecule transmissions in the same environment instead of assuming clean, isolated channel uses. A key question is how often the channel can be reused. If molecules are indistinguishable, the transmitter must wait until all molecules from one message arrive before sending the next one leading to slow communication and a delayed decoding procedure. On the other hand, if molecules are distinguishable, different transmissions can overlap in time because the receiver can tell which molecule belongs to which message. This allows faster, even simultaneous communication. This phenomenon resembles the spatial orthogonality (spatial channel use) used for identification in MC systems over the Poisson channel \cite{Salariseddigh-TMBMC} the receiver employs distinct receptors, each specifically designed to bind a single type of molecule, with no affinity for others or the affine Poisson channel \cite{Salariseddigh25_ITW} where receptors are not perfectly selective, but they operate in an affinity-based regime, namely, each receptor is tuned for a target molecule but may also bind other molecules with lower probability. Analyzing such affinity-based receptors, together with determining the eligible classes of the cross-reactivity affinity matrices is a potential direction for future work. Future research should study how to design systems that handle overlapping transmissions and how much performance improves when molecules can be distinguished.
	\item \textbf{\textcolor{blau_2b}{Inter-Symbol Interference:}} We note that repeated use of the channel leads to ISI, as in conventional communication systems, due to unbounded propagation delays. Molecules released for different symbols may overlap and arrive out of order, particularly when multiple molecules are released to provide diversity. Consequently, the channel exhibits memory, with the number of molecules still in transit determining its state and associated noise characteristics. An encouraging direction for future work is therefore the identification problem in IG channels with memory. In particular, the tail property $\Pr(Z > T_s) > 0$ implies that residual molecules from previous transmissions can contribute to subsequent observations, resulting in a state-dependent channel with memory. Addressing this problem will require analytical and identification frameworks that extend beyond the memoryless setting, potentially drawing on established models for channels with state and memory, such as the Gilbert--Elliott model.
	
	\item \textbf{\textcolor{blau_2b}{Noise Regularity Condition:}} In the converse proof, the continuity of the IG PDF is established under a standard regularity assumption on the noise process. We emphasize that a deterministic or uniform lower bound on the noise sequence is neither required nor can be imposed, since the inverse Gaussian noise has support on \(\mathbb{R}_{+}\) and therefore admits arbitrarily small realizations with positive probability. Instead, we exploit an intrinsic property of the inverse Gaussian distribution, namely its super-exponential decay near the origin. This tail behavior ensures that extremely small noise realizations occur with sufficiently rapidly vanishing probability. In particular, we employ the Borel–Cantelli lemma to translate this rapid decay into an almost sure statement, thereby guaranteeing that events corresponding to excessively small noise realizations occur only finitely often. This yields an almost sure lower bound on the minimum noise realization of the form
	\[
	\Pr\!\left(
	Z_{\min}(n,\omega)
	\ge
	\frac{c_\eta}{\log n}
	\ \text{for all sufficiently large } n
	\right)
	=1,
	\]
	for every \(\eta>1\), where \(c_\eta = \lambda / 2(1+\eta) > 0\), or equivalently,
	\[
	Z_{\min}(n,\omega)
	=
	\Omega\!\left(\frac{1}{\log n}\right)
	\qquad \text{almost sure.}
	\]
	
	Under this almost sure control of the minimum noise, the continuity of the IG PDF required in the converse analysis holds on an event of probability one, and the converse argument proceeds unchanged.

	
	\item \textbf{\textcolor{blau_2b}{Decoder:}} In an information-theoretic setting, the IG channel differs fundamentally from the Gaussian channel in both noise structure and the resulting decoding rule. The Gaussian model assumes additive, amplitude-domain noise that is symmetric and signal-independent with constant variance, yielding a log-likelihood that depends on a simple quadratic (Euclidean) distance and leads to standard ML decoding. Specifically, observe that the log-likelihood for the Gaussian channel reads
	
	\begin{equation}
	\log f_{Y \mid X}(y \mid x)
	=
	- \frac{(y - x)^2}{2\sigma^2}
	- \frac{1}{2} \log(2\pi\sigma^2)
	\end{equation}
	
	which features a clean quadratic Euclidean distance.
	
	By contrast, the IG channel models noise as a random propagation delay: it is additive in the time domain, strictly positive, and skewed. Crucially, its statistics depend on the input, specifically, both mean and variance scale with the input $x$ so the channel exhibits a \emph{multiplicative structure} in its conditional distribution rather than in the signal itself. That is, the statistical properties of the noise, i.e., the mean and variance, scale with the input $x.$ (e.g., $\mathrm{Var}(Z) \propto \mu^3/\lambda$). This distinction has direct implications for decoding. The Gaussian likelihood produces a Euclidean metric, whereas the IG log-likelihood induces a nonlinear, asymmetric, input-dependent weighted quadratic (non-Euclidean) form. Specifically, observe that the log-likelihood for the IG channel reads
	\begin{equation}
	\log f_{Y|X}(y|x)
	=
	-\frac{3}{2}\log z
	-\frac{\lambda}{2}
	\left(
	\frac{z}{\mu^2}
	+\frac{1}{z}
	\right)
	+C ,
	\end{equation}
	where $z \triangleq y-x$ and $C \triangleq \lambda / \mu + \log( \lambda / 2\pi ) / 2$ which is equipped with a non-Euclidean metric that is not well-behaved. Consequently, optimal and mismatched decoders for the IG channel must account for this signal-dependent geometry, rather than relying on standard distance-based rules. In the Shannon channel coding theorem setting, optimal decoding is explicitly tied to the channel law via the likelihood function. While Gaussian channels yield a Euclidean distance metric, the IG channel induces a nonlinear, non-Euclidean, asymmetric, and signal-dependent metric. Therefore, applying Euclidean (nearest-neighbor) decoding to the IG channel constitutes mismatched decoding, and is generally suboptimal because it does not reflect the intrinsic statistics of the channel.
	
	We note that for the identification problem, the optimal decision rule may depend on likelihood ratios or other criteria (see Neyman-Pearson lemma \cite{Lehmann22}), and Euclidean structure may still emerge as effective or even optimal under certain embeddings or approximations. Moreover, while the non-Euclidean nature of the IG likelihood implies a mismatched Euclidean decoding in the Shannon sense, this does not automatically extend to identification problems, where optimality criteria and induced metrics can differ.

	\item \textbf{\textcolor{blau_2b}{Converse Techniques:}} In our converse proof, we follow the standard approach used for the Gaussian channels \cite{Salariseddigh_IT,Salariseddigh_22_ITW} and Poisson channels \cite{Salariseddigh-TMBMC,Salariseddigh_OJCOMS_23}, which relies on the continuity of the channel law to establish a separation between distinct codewords. Within this framework, satisfying the continuity requirement typically forces the minimum distance between distinct codewords (or symbols) to vanish as the codeword length $n$ grows, which in turn leads to a looser upper bound. A more refined approach, originating from quantum information theory is based on the classical \emph{Fuchs–van de Graaf} inequalities \cite{Wilde13}, i.e.,
\begin{align}
		& 1 - F(f_{\fZ}(.|\fc_{i_1}),f_{\fZ}(.|\fc_{i_2})) \le d_{\rm TV}(f_{\fZ}(.|\fc_{i_1}) - f_{\fZ}(.|\fc_{i_2}) ) 
		\nonumber\\&
		\leq \sqrt{1 - F^2(f_{\fZ}(.|\fc_{i_1}),f_{\fZ}(.|\fc_{i_2}))},
\end{align}
	where $d_{\rm TV}(.,.)$ denotes the total variation distance of output distribution for distinct codewords and $F(.,.)$ is the \emph{fidelity} or \emph{Bhattacharyya coefficient}. We note that the total variation distance associated with any sequence of \emph{reliable} identification codes remains uniformly bounded away from zero. This property accompanying with leveraging the above relation may enable us the construction of a sphere-packing argument whose spheres radii remains constant and does not decay with $n.$ Consequently, this approach may yield a tighter upper bound. However, we note that unlike the Gaussian and Poisson channels where their log-fidelity feature Euclidean \cite{Colomer26} or root square Euclidean distance \cite{Colomer_2025}, IG model suffers from a non-linear exponent which prevents straightforward linearization which is required for establishing mutual distance of codewords.
\end{itemize}

\section{Acknowledgments}
The authors acknowledge Professor Dr. -Ing. Dr. rer. nat. Holger Boche for helpful discussions. Dr. -Math. Christian Deppe was supported by the Federal Ministry of Research, Technology, and Space (BMFTR) under the research program Communication Systems – “Souverän. Digital. Vernetzt.” through the joint project 6G-life (Project ID: 16KIS2415) and the project QSTARS (Project ID: 16KIS2196).

\appendices

\section{Density of Brownian Motion with Drift}
\label{App.PDF_Pos}

We consider the process $X_t = vt + \sigma B_t,$ where \(B_t\) is the standard Brownian motion \cite{Karatzas14}. A fundamental property of Brownian motion is $B_t \sim \mathcal{N}(0,t).$ Next, invoking the scaling property gives $\sigma B_t \sim \mathcal{N}(0,\sigma^2 t).$ Now, observe that if \(Z \sim \mathcal{N}(0,\sigma^2 t)\), then adding a constant drift impacts the expectation by $vt$ factor, i.e., $Z + vt \sim \mathcal{N}(vt,\sigma^2 t).$ Hence, $X_t \sim \mathcal{N}(vt,\sigma^2 t).$ Next, observe that if \(X \sim \mathcal{N}(\mu,\tau^2)\), then its density is given by
\begin{align}
	f(x) = \frac{1}{\sqrt{2\pi \tau^2}} \exp\!\left(-\frac{(x-\mu)^2}{2\tau^2}\right).
\end{align}

Now, substituting \(\mu = vt\) and \(\tau^2 = \sigma^2 t\), we obtain
\begin{align}
	f_X(x,t) = \frac{1}{\sqrt{2\pi\sigma^2 t}} \exp\!\left(-\frac{(x-vt)^2}{2\sigma^2 t}\right).
\end{align}

\section{Equivalence of Inverse Gaussian and First-Passage Time PDF}
\label{App.Eq_IG_FPT}

In the following, we show that the first-passage time PDF can be derived from the IG PDF by adopting appropriate mean and shape parameters. We start with the IG density:
\begin{align}
	\label{Eq.Inv_Gau_PDF_Equi}
	f_Z(z;\mu,\lambda)
	= \sqrt{\frac{\lambda}{2\pi}}\, z^{-3/2}
	\exp\!\left(-\frac{\lambda (z-\mu)^2}{2\mu^2 z}\right),
	\quad z>0 .
\end{align}
Next, we set $\mu = d/v$ and $\lambda = d^2 / \sigma^2$ and substitute them into \eqref{Eq.Inv_Gau_PDF_Equi}. Observe that the prefactor is simplified to
\[
\sqrt{\frac{\lambda}{2\pi}}
= \sqrt{\frac{d^2}{2\pi\sigma^2}}
= \frac{d}{\sigma\sqrt{2\pi}}.
\]
Thus,
\[
\sqrt{\frac{\lambda}{2\pi}} z^{-3/2}
= \frac{d}{\sigma\sqrt{2\pi}} z^{-3/2}.
\]
Next, we address the exponent $\lambda (z-\mu)^2 / 2\mu^2 z.$ Now, substituting $\lambda$ and $\mu$ gives
\begin{align}
	\frac{\lambda (z-\mu)^2}{2\mu^2 z} & = \frac{(d^2/\sigma^2)(z - d/v)^2}{2(d^2/v^2)z}
	\nonumber\\&
	= \frac{v^2 (z - d/v)^2}{2\sigma^2 z} = \frac{(vz - d)^2}{2\sigma^2 z}.
\end{align}
Hence, we obtain
\[
\frac{\lambda (z-\mu)^2}{2\mu^2 z}
= \frac{(d - vz)^2}{2\sigma^2 z}.
\]
Now, to derive the final form, we substitute both the prefactor and the exponent parts and obtain the first-passage time density as follows:
\begin{align}
	f_Z(z) = \frac{d}{\sigma\sqrt{2\pi}}\, z^{-3/2}
	\exp\!\left(-\frac{(d - vz)^2}{2\sigma^2 z} \right) ,
\end{align}
which features an exponentially decaying tail with finite mean and variance. Observe that the zero drift first-passage time density reads
\begin{align}
	\label{Eq.FPT_Zero_Drift}
	f_Z(z) = \frac{d}{\sigma\sqrt{2\pi}}\, z^{-3/2} \exp\!\left(-\frac{d^2}{2\sigma^2 z} \right),
\end{align}
which features a heavy tail with \emph{infinite} mean and variance.

\section{Proof of Lemma \ref{Lem.Neg_Log_Vol}; Bounds for Negative Log-Volume of Hyper Sphere}
\label{App.Neg_Log_Vol}

\begin{proof}
	Recall that $\text{Vol}\big(\S_{\fc_1}(n,r_0)\big) =  (r_0 \sqrt{\pi})^{n} / \Gamma(n/2+1).$ Hence,
	\begin{align}
		\log \text{Vol}\big(\S_{\fc_1}(n,r_0)\big) = (n/2) \log \pi + n \log r_0 - \log \Gamma(n/2+1)
	\end{align}
	Next, we use the following lemma which establishes lower and upper bounds on the Gamma function.
	\begin{customlemma}{2}
		\label{Lem.Gamma_Fun}
		For every integer $n\ge 4,$ we have
		\begin{align}
			\left\lfloor n/2 \right\rfloor!
			\le
			\Gamma\!\left( n/2 + 1 \right)
			\le
			\left\lceil n/2 \right\rceil!.
		\end{align}
	\end{customlemma}
	\begin{proof}
		The proof is provided in Appendix \ref{App.Gamma_Fun}.
	\end{proof}
	Thereby, the log-volume of a hypersphere reads
	\begin{align}
		\log \text{Vol}\big(\S_{\fc_1}(n,r_0)\big) & = (n/2) \log \pi + n \log r_0 - \log \Gamma(n/2+1)
		\nonumber\\
		&
		\leq (n/2) \log \pi + n \log r_0 - \log \left\lfloor n/2 \right\rfloor ! 
	\end{align}
	where the inequality employs the lower bound in Lemma \ref{Lem.Gamma_Fun}. Next, we use the Stirling's approximation \cite[P.~52]{F66} to decompose the last term in the above equation, i.e.,
\begin{equation}
		\log n! = n \log n - n \log e + o(n).
\end{equation}
	 Hence,
	\begin{align}
		& \log \text{Vol}\big(\S_{\fc_1}(n,r_0)\big)
		\nonumber\\&
		\leq (n/2) \log \pi + n \log r_0 - \log \left\lfloor n/2 \right\rfloor !
		\nonumber\\&
		\leq (n/2) \log \pi + n \log r_0 - \left\lfloor n/2 \right\rfloor \log \left\lfloor n/2 \right\rfloor - \left\lfloor n/2 \right\rfloor \log e
		\nonumber\\&+ o(\left\lfloor n/2 \right\rfloor)
		\nonumber\\&
		\leq (n/2) \log \pi + n \log r_0 - \left\lfloor n/2 \right\rfloor \big( \log \left\lfloor n/2 \right\rfloor - \log e \big)
		\nonumber\\&
		+ o(\left\lfloor n/2 \right\rfloor) .
	\end{align}
	Next, we use the identity: $\left\lfloor n/2 \right\rfloor = (n/2) + O(1)$ to simplify the above equation.
	\begin{align}
		& \log \text{Vol}\big(\S_{\fc_1}(n,r_0)\big)
		\nonumber\\&
		\leq (n/2) \log \pi + n \log r_0 - \left\lfloor n/2 \right\rfloor \log \big( \left\lfloor n/2 \right\rfloor / e \big) + o(\left\lfloor n/2 \right\rfloor)
		\nonumber\\&
		\leq (n/2) \log \pi + n \log r_0
		\nonumber\\& - ((n/2) + O(1) ) \log \Big( \frac{ (n/2) + O(1)}{e} \Big) + o(n)
		\nonumber\\&
		\stackrel{(a)}{=} (n/2) \log \pi + n \log r_0
		\nonumber\\& 
		- ((n/2) + O(1) ) \big( \log (n/2) + O(1/n) - \log e \big) + o(n)
		\nonumber\\&
		\stackrel{(b)}{=} (n/2) \log \pi + n \log r_0
		\nonumber\\&- (n/2) \big( \log (n/2) + O(1/n) - \log e \big) + o(n)
		\nonumber\\&
		\stackrel{(c)}{=} (n/2) \log \pi + n \log r_0 - (n/2) \big( \log (n/2) - \log e \big) + o(n)
		\nonumber\\&
		= (n/2) \log \pi + n \log r_0 - (n/2) \log (n/2) + (n/2) \log e
		\nonumber\\&
		+ o(n)
		\nonumber\\&
		= \frac{n}{2}\log\!\left(\frac{2\pi e r_0^2}{n}\right) + o(n)
	\end{align}
	where
	\begin{itemize}[leftmargin=*]
		\item $(a)$ uses the identity $\left\lceil n/2 \right\rceil = n/2 + O(1)$ and the logarithmic expansion $\log\bigl(x + O(1)\bigr) = \log x + O\!\left(1/x\right)$ with setting $x=n/2.$
		\item $(b)$ employs that the correction from the $O(1)$ factor satisfies $O(1)\cdot \log(n/2) = O(\log n) = o(n), O(1)\cdot O(1/n) = O(1/n) = o(n),$
		so all contributions from the ceiling error are negligible at scale $n$.
		\item $(c)$ holds since $(n/2) \cdot O(1/n) = O(1) = o(n)$ so perturbations inside the logarithm do not affect the leading asymptotics.
	\end{itemize}
	Thereby,
	\begin{align}
		& - \log \text{Vol}(\S_{\fc_1}(n,r_0))
		\nonumber\\&
		\geq (n/2) \log (n/2) - n \log r_0  - (n/2) \log \pi e + o(n) .
	\end{align}

	Next, we proceed to establish an upper bound on the negative log-volume of a hypersphere. Recall that $\text{Vol}\big(\S_{\fc_1}(n,r_0)\big) =  (r_0 \sqrt{\pi})^{n} / \Gamma(n/2+1).$ Hence,
	\begin{align}
		\log \text{Vol}\big(\S_{\fc_1}(n,r_0)\big) & = (n/2) \log \pi + n \log r_0 - \log \Gamma(n/2+1)
	\end{align}
	Thereby, the log-volume of the hypersphere reads
	\begin{align}
		\log \text{Vol}\big(\S_{\fc_1}(n,r_0)\big)
		&=
		(n/2)\log\pi
		+
		n\log r_0
		-
		\log\Gamma(n/2+1)
		\nonumber\\
		&
		\ge
		(n/2)\log\pi
		+
		n\log r_0
		-
		\log\left\lceil n/2\right\rceil !,
	\end{align}
	where the inequality follows from the upper bound in Lemma~\ref{Lem.Gamma_Fun}. Next, employing the Stirling's approximation \cite[P.~52]{F66}, $\log n! = n\log n - n\log e + o(n)$ we obtain
	\begin{align}
		& \log \text{Vol}\big(\S_{\fc_1}(n,r_0)\big)
		\nonumber\\&
		\ge
		(n/2)\log\pi
		+
		n\log r_0
		-
		\log\left\lceil n/2\right\rceil !
		\nonumber\\
		&
		\ge
		(n/2)\log\pi
		+
		n\log r_0
		-
		\left\lceil n/2\right\rceil
		\log\left\lceil n/2\right\rceil
		+
		\left\lceil n/2\right\rceil
		\log e
		\nonumber\\&
		+ o\!\left(\left\lceil n/2\right\rceil\right)
		\nonumber\\
		&
		\geq
		(n/2)\log\pi
		+
		n\log r_0
		-
		\left\lceil n/2\right\rceil
		\Big(
		\log\left\lceil n/2\right\rceil
		-
		\log e
		\Big)
		+
		\nonumber\\&
		o\!\left(\left\lceil n/2\right\rceil\right).
	\end{align}
	Next, we use the identity: $\left\lceil n/2 \right\rceil = n/2 + O(1)$ to simplify the above equation.
	
	\begin{align}
		& \log \text{Vol}\big(\S_{\fc_1}(n,r_0)\big) 
		\nonumber\\&
		\le (n/2)\log\pi + n\log r_0
		- \left\lceil n/2 \right\rceil \log\!\big(\left\lceil n/2 \right\rceil / e\big)
		+ o\!\left(\left\lceil n/2 \right\rceil\right)
		\nonumber\\
		& \leq (n/2) \log \pi + n \log r_0
		\nonumber\\&
		- ((n/2) + O(1) ) \log \Big( \frac{ (n/2) + O(1)}{e} \Big) + o(n)
		\nonumber\\
		&\stackrel{(a)}{=}
		(n/2)\log\pi + n\log r_0
		\nonumber\\&
		- (n/2 + O(1))
		\Big(\log(n/2) + O(1/n) - \log e\Big)
		+ o(n)
		\nonumber\\
		&\stackrel{(b)}{=}
		(n/2)\log\pi + n\log r_0
		\nonumber\\&
		- (n/2)
		\Big(\log(n/2) + O(1/n) - \log e\Big)
		+ o(n)
		\nonumber\\
		&\stackrel{(c)}{=}
		(n/2)\log\pi + n\log r_0
		- (n/2)\big(\log(n/2) - \log e\big)
		+ o(n)
		\nonumber\\&
		= \frac n2 \log\!\left(\frac{2\pi e r_0^2}{n}\right)
		+ o(n).
	\end{align}
	where
	\begin{itemize}[leftmargin=*]
		\item $(a)$ uses the identity $\left\lceil n/2 \right\rceil = n/2 + O(1)$ and the logarithmic expansion $\log\bigl(x + O(1)\bigr) = \log x + O\!\left(1/x\right)$ with setting $x=n/2.$
		\item $(b)$ employs that the correction from the $O(1)$ factor satisfies $O(1)\cdot \log(n/2) = O(\log n) = o(n),$ and $O(1)\cdot O(1/n) = O(1/n) = o(n),$
		so all contributions from the ceiling error are negligible at scale $n$.
		\item $(c)$ holds since $(n/2) \cdot O(1/n) = O(1) = o(n)$ so perturbations inside the logarithm do not affect the leading asymptotics.
	\end{itemize}
	Thereby,
	\begin{align}
		& - \log \text{Vol}(\S_{\fc_1}(n,r_0))
		\nonumber\\&
		 \leq (n/2) \log (n/2) - n \log r_0  - (n/2) \log \pi e + o(n) .
	\end{align}
	
	Observe that the $o(n)$ error terms in the upper and lower bounds are not the same function and need not coincide in any way. They arise from different approximation steps (e.g., different sides of Stirling’s inequality and different rounding effects), and may even have different signs or oscillatory behavior. The only uniform fact is that they are both sublinear, i.e., their magnitude is negligible compared to $n$. In particular, their order of growth coincides (both are $o(n)$), even though their exact values and finer asymptotics can be completely different.
	
\end{proof}

\section{Proof of Lemma \ref{Lem.Gamma_Fun}; Lower Bound for Gamma Function}
\label{App.Gamma_Fun}

\begin{proof}
	
	Using the functional equation of the Gamma function,
	\begin{align}
		\Gamma(x+1)=x\,\Gamma(x), \qquad \forall x>0,
	\end{align}
	we obtain
	\begin{align}
		\label{Eq.Gamma}
		\Gamma\!\left( n/2 + 1 \right) = (n/2)\,\Gamma\!\left( n/2 \right).
	\end{align}
	
	Since $n\ge4$, we have $\left\lfloor n/2 \right\rfloor \ge 2,$ and all arguments lie in the monotonicity region of the Gamma function. Moreover,
	\begin{align}
		\label{Eq.DB_n/2}
		\left\lfloor n/2 \right\rfloor
		\le
		n/2
		\le
		\left\lceil n/2 \right\rceil.
	\end{align}
	Hence, it holds
	\begin{align}
		\label{Eq.DB_Gamma}
		\Gamma\!\left(\left\lfloor n/2 \right\rfloor\right)
		\le
		\Gamma\!\left(n/2\right)
		\le
		\Gamma\!\left(\left\lceil n/2 \right\rceil\right).
	\end{align}
	Combining \eqref{Eq.DB_n/2} and \eqref{Eq.DB_Gamma} with \eqref{Eq.Gamma} yields
	\begin{align}
		\left\lfloor n/2 \right\rfloor
		\Gamma\!\left(\left\lfloor n/2 \right\rfloor\right)
		\le
		\Gamma\!\left( n/2 + 1 \right)
		\le
		\left\lceil n/2 \right\rceil
		\Gamma\!\left(\left\lceil n/2 \right\rceil\right).
	\end{align}
	Finally, since
	$\left\lfloor n/2 \right\rfloor,\left\lceil n/2 \right\rceil\in\mathbb Z_{+}$, i.e., are positive integers, it holds that
	\begin{align}
		\Gamma\!\left(\left\lfloor n/2 \right\rfloor\right) =
		\left(\left\lfloor n/2 \right\rfloor-1\right)! \, \text{ and } \,
		\Gamma\!\left(\left\lceil n/2 \right\rceil\right) =	\left(\left\lceil n/2 \right\rceil-1\right)!.
	\end{align}
	Therefore, we obtain
	\begin{align}
		\left\lfloor n/2 \right\rfloor!
		\le
		\Gamma\!\left( n/2 + 1 \right)
		\le
		\left\lceil n/2 \right\rceil!.
	\end{align}
	
\end{proof}

\section{Proof of Lemma \ref{Lem.MGF}; Moment Generating Function of an Inverse Gaussian Random Variable}
\label{App.MGF}
In the following, we provide an upper bound on the fourth non-central moment $\mathbb{E}[Z^4]$ of an IG-distributed RV $Z \sim \text{IG}(\mu,\lambda).$
\begin{proof}
	The moment-generating function for $Z \sim \text{IG}(\mu,\lambda)$ is given by \cite[Ch. 2]{Chhikara24}
	\begin{align}
		G_Z(\alpha)  = \mathbb{E}[e^{\alpha Z}] & = \int_0^\infty e^{\alpha z} f_Z(z)\, dz 
		\nonumber\\&
		= \exp\!\left[ \frac{\lambda}{\mu} \Big( 1 - \sqrt{1 - \frac{2 \mu^2 \alpha}{\lambda}} \Big) \right] .
	\end{align}
	for every $\alpha < \lambda / (2\mu^2).$ Since the fourth non-central moment equals the fourth-order derivative of the moment-generating function at $\alpha = 0,$ we have
	\begin{align}
		\label{Eq.Z4}
		& \mathbb{E}[Z^4] = G_Z^{(4)}(0) = \frac{d^4}{d\alpha^4}G_Z(\alpha) \Big|_{\alpha=0} =
		\nonumber\\&
		\bigg( \frac{15 \mu^7}{\lambda^3 S^{7/2}}
		+ \frac{15 \mu^6}{\lambda^2 S^{3}} + \frac{6 \mu^5}{\lambda S^{5/2}} + \frac{\mu^4}{S^{2}} \bigg) \exp \Big( \lambda (1 - \sqrt{S}) / \mu \Big) \Big|_{\alpha = 0}
		\nonumber\\&
		= \mu^4 + \frac{6 \mu^5}{\lambda} + \frac{15 \mu^6}{\lambda^2} + \frac{15 \mu^7}{\lambda^3} ,
	\end{align}
	where $S = 1 - (2\mu^2\alpha) / \lambda.$ Next, we proceed to establish an upper bound on the expression in \eqref{Eq.Z4}. Let us define $\pi \triangleq \mu / \lambda.$ Then,
	\begin{align}
		\mu^4 + \frac{6 \mu^5}{\lambda} + \frac{15 \mu^6}{\lambda^2} + \frac{15 \mu^7}{\lambda^3} = \mu^4 (1 + 6\pi + 15\pi^2 + 15\pi^3).
	\end{align}
	Now, we compare the above polynomial with the following binomial expansion 
	\begin{align}
		(1+ \pi)^6 = 1 + 6\pi + 15\pi^2 + 20\pi^3 + 15 \pi^4 + 6\pi^5 + \pi^6.
	\end{align}
	Since all the binomial coefficients are positive and for every $\pi >0$ we have
	\begin{align}
		1 + 6\pi + 15\pi^2 + 15\pi^3 \leq (1+ \pi)^6,
	\end{align}
	therefore we obtain the following clean bound on $\mathbb{E}[Z^4]:$
	\begin{align}
		\mathbb{E}[Z^4] & \leq \mu^4 (1 + 6\pi + 15\pi^2 + 15\pi^3)
		\nonumber\\&
		\leq \mu^4 (1 + \pi)^6 = \mu^4 \big( 1 + \frac{\mu}{\lambda} \big)^6.
	\end{align}
\end{proof}

\section{Proof of Lemma~\ref{Lem.Converse}; Separation Property of Identification Codebooks}
\label{App.Converse_Proof}

We establish Lemma~\ref{Lem.Converse} via a proof by contradiction. To this end, suppose that the condition in \eqref{Ineq.Conv_Distance} is violated. Under this assumption, we derive an inconsistency by showing that the combined type~I and type~II error probabilities converge to one, that is, $$\lim_{n \to \infty} \big[ P_{e,1}(i_1) + P_{e,2}(i_2,i_1) \big] = 1.$$ which contradicts the assumed behavior, i.e., the reliability of the identification codes seqeunce.

\begin{proof}
	Fix $e_1>0$ and $e_2>0$. Let $\tau_1,\tau_2,\tau,\zeta,\eta>0$ be arbitrarily small. Assume to the contrary that 
	there exist two messages $i_1$ and $i_2$, where $i_1\neq i_2$, such that $\forall t \in [\![n]\!]$
	\begin{align}
		\label{Eq.alpha_n}
		| c_{i_1,t} - c_{i_2,t} | < 2\alpha_n = 2a^2/n^{1+2b}.
	\end{align}
	where $a,b>0$ are fixed and arbitrary small constants, respectively. Now let us define two subsets as follows
	\begin{align}
		\label{Eq.Event_BC}
		\mathbbmss{D}_{i_1,i_2} & \triangleq \Big\{ \fy \in \mathbbmss{D}_{i_1}: \| \fy - \fc_{i_2} \| \leq \sqrt{n( \mu^2 + \mu^3 / \lambda + \zeta)} \Big\},
		\nonumber\\
		\mathbbmss{E}_{i_2} & \triangleq \Big\{ \fy \in \mathbb{R}^{n}:
		\| \fy - \fc_{i_2} \| \leq \sqrt{n( \mu^2 + \mu^3 / \lambda + \zeta)} \Big\} .
	\end{align}
	Next, we can bound the type I error probability according to the events degined in \eqref{Eq.Event_BC} as follows
	\begin{align}
		& 1-P_{e,1}(i_1) = \int_{\mathbbmss{D}_{i_1}} \hspace{-2mm} f_{\fZ}(\fy - \fc_{i_1}) d\fy
		\nonumber\\ &
		= \int_{\mathbbmss{D}_{i_1,i_2}} f_{\fZ}(\fy - \fc_{i_1}) d\fy + \int_{\mathbbmss{D}_{i_1} \setminus \mathbbmss{D}_{i_1,i_2}} f_{\fZ}(\fy - \fc_{i_1}) d\fy 
		\nonumber\\&
		\leq \int_{\mathbbmss{D}_{i_1,i_2}} f_{\fZ}(\fy - \fc_{i_1}) d\fy + \int_{\mathbbmss{E}_{i_2}^c} f_{\fZ}(\fy - \fc_{i_1}) d\fy ,
		\label{Eq.Pe1boundConv0Fast}
	\end{align}
	where the final inequality uses the inclusion $\mathbbmss{D}_{i_1} \setminus \mathbbmss{D}_{i_1,i_2} \subset \mathbbmss{E}_{i_2}^c.$ Now consider the second integral, whose domain is $\mathbbmss{E}_{i_2}^c$. By the triangle inequality
	\begin{align}
		\| \fy - \fc_{i_1} \| & \geq \| \fy - \fc_{i_2} \| - \| \fc_{i_1} - \fc_{i_2} \|
		\nonumber\\&
		> \sqrt{n( \mu^2 + \mu^3 / \lambda + \zeta)} - \| \fc_{i_1} - \fc_{i_2} \|
		\nonumber\\&
		\geq \sqrt{n( \mu^2 + \mu^3 / \lambda + \zeta)} - 2\alpha_n .
	\end{align}
	For sufficiently large $n$, this implies the following subset
	\begin{align}
		\mathbbmss{F}_{i_1}^c = \Big\{\fy \in \mathbb{R}^{n} \; : \, \| \fy - \fc_{i_1}\| > \sqrt{n( \mu^2 + \mu^3 / \lambda + \eta)}  \Big\},
		\label{Eq.Regiong0}
	\end{align}
	for $\eta < \zeta / 2.$ That is, let define the events
		\begin{align}
			r_\eta
			& \triangleq
			\sqrt{n\left(\mu^2+\frac{\mu^3}{\lambda}+\eta\right)},
			&
			r_\zeta
			& \triangleq
			\sqrt{n\left(\mu^2+\frac{\mu^3}{\lambda}+\zeta\right)}.
		\end{align}
		Then, $\left\{\|\mathbf{y}-\mathbf{c}_{i_2}\|\ge r_\zeta\right\}
		\Longrightarrow
		\left\{\|\mathbf{y}-\mathbf{c}_{i_1}\|\ge r_\eta\right\}.$
	Thereby, we conclude that $\mathbbmss{F}_{i_1,}^c \supset \mathbbmss{E}_{i_2}^c.$ Hence, the second integral in \eqref{Eq.Pe1boundConv0Fast} is bounded by
	\begin{align}
		\label{Ineq.F_i_1_i_2_compl}
		& \int_{\mathbbmss{F}_{i_1}^c} \hspace{-1mm} f_{\fZ}( \fy - \fc_{i_1}) d\fy
		\nonumber\\&
		= \Pr \Big( \| \fY(i_1) - \fc_{i_1} \| \hspace{-.7mm} > \hspace{-.7mm} \sqrt{n( \mu^2 + \mu^3 / \lambda + \eta)} \Big)
		\nonumber\\&
		\stackrel{(a)}{=} \hspace{-.7mm} \Pr \big( n^{-1} \| \fZ \|^2 - (\mu^2 + \mu^3 / \lambda ) > \eta \big)
		\nonumber\\&
		\stackrel{(b)}{\leq} \frac{\mu^4 \big( 1 + (\mu / \lambda) \big)^6}{n\eta^2} \leq \tau,
	\end{align}
	for sufficiently large $n$, where $(a)$ holds by the Chebyshev's inequality, followed by the substitution of
	$\fZ = \fY(i_1) - \fc_{i_1}$ and $(b)$ uses the following
	\begin{align}
		\text{Var} [ n^{-1} \| \fZ \|^2 ] & \stackrel{(a)}{=} n^{-2} \sum_{t=1}^{n} \text{Var} [Z_{t}^2]
		 \nonumber\\&
		 \stackrel{(b)}{=} n^{-2} \sum_{t=1}^{n} \mathbb{E}[Z_{t}^4] - \mathbb{E}^2[Z_{t}^2]
		 \nonumber\\&
		\stackrel{(c)}{\leq} n^{-1} \mu^4 \big( 1 + (\mu / \lambda) \big)^6 ,
	\end{align}
	where $(a)$ invokes $Z_{t} \overset{\text{\tiny i.i.d.}}{\sim} \text{IG}(\mu,\lambda),$ $(b)$ uses $\text{Var}[Z_{t}^2] = \mathbb{E}[Z_{t}^4] - \mathbb{E}^2[Z_{t}^2]$ and $(c)$ exploits $\text{Var}[Z_{t}^2] \geq 0$ and $\mathbb{E}[Z_t^4] \leq \mu^4 \big( 1 + (\mu / \lambda) \big)^6$ for $Z_{t} \overset{\text{\tiny i.i.d.}}{\sim} \text{IG}(\mu,\lambda),$ cf. Lemma \ref{Lem.MGF}. Thus, merging \eqref{Eq.Pe1boundConv0Fast} and \eqref{Ineq.F_i_1_i_2_compl} gives
	\begin{align}
		\label{Eq.ComplTypeIFast}
		1 - \tau - P_{e,1}(i_1) \leq \int_{\mathbbmss{D}_{i_1,i_2}} f_{\fZ}(\fy - \fc_{i_1}) d\fy.
	\end{align}
	Now, we can focus on the inner integral with domain of $\mathbbmss{D}_{i_1,i_2}$, i.e., when
	\begin{align}
		\| \fy - \fc_{i_2} \| \leq \sqrt{n( \mu^2 + \mu^3 / \lambda + \zeta)} .
		\label{Eq.ui2DistFast}
	\end{align}
	Observe that the absolute value of difference of noise distribution corresponding to distinct codewords $\fc_{i_1}$ and $\fc_{i_2}$ can be written as follows
	\begin{align}
		\label{Ineq.Error_Diff}
		& \big| f_{\fZ}(\fy - \fc_{i_1}) - f_{\fZ}(\fy - \fc_{i_2}) \big|
		\nonumber\\&
		= f_{\fZ}(\fy - \fc_{i_1}) \cdot \bigg| 1 - \frac{f_{\fZ}(\fy - \fc_{i_2})}{f_{\fZ}(\fy - \fc_{i_1})} \bigg| .
	\end{align}
	Next, employing Lemma \ref{Lem.Log_Likelihood_Ratio} and recalling \eqref{Ineq.Error_Diff}, we obtain
	\begin{align}
		\label{Ineq.GaussianContinuityFast}
		& \big| f_{\fZ}(\fy - \fc_{i_1}) - f_{\fZ}(\fy - \fc_{i_2}) \big|
		\nonumber\\&
		\leq  f_{\fZ}(\fy - \fc_{i_1}) \cdot \bigg| 1 - \frac{f_{\fZ}(\fy - \fc_{i_2})}{f_{\fZ}(\fy - \fc_{i_1})} \bigg| \leq \tau f_{\fZ}(\fy - \fc_{i_1}),
	\end{align}
	for sufficiently small $\tau_1,\tau_2>0$ such that $\tau_1 + \tau_2 \leq \tau.$ Now, using \eqref{Eq.ComplTypeIFast} we have the following lower bound on the sum of the type I and type II error probabilities
	\begin{align}
		& P_{e,1}(i_1) + P_{e,2}(i_2,i_1)
		\nonumber\\&
		\geq 1-\tau - \int_{\mathbbmss{D}_{i_1,i_2}} f_{\fZ}(\fy - \fc_{i_1})\,d\fy + \int_{\mathbbmss{D}_{i_1}} f_{\fZ}(\fy - \fc_{i_2}) \,d\fy
		\nonumber\\&
		\geq 1- \tau - \int_{\mathbbmss{D}_{i_1,i_2}} \big| (f_{\fZ}(\fy - \fc_{i_1}) - f_{\fZ}(\fy - \fc_{i_2})) \big| \,d\fy.
	\end{align}
	Hence, by (\ref{Ineq.GaussianContinuityFast}), and bounds on the error probabilities given in Definition \ref{Def.Inverse_Gaussian_Code} we obtain
	\begin{align}
		\lambda_1 + \lambda_2 & \geq P_{e,1}(i_1) + P_{e,2}(i_2,i_1)
		\nonumber\\&
		\geq 1- \tau -\tau \int_{\mathbbmss{D}_{i_1,i_2}} f_{\fZ}(\fy - \fc_{i_1}) d\fy \geq 1-2\tau .
	\end{align}
	Hence, for the $n$-th code in the sequence, we obtain, for all sufficiently large $n$,
	\begin{align}
		\lambda_1^{(n)}+\lambda_2^{(n)}
		\geq
		P_{e,1}^{(n)}(i_1)+P_{e,2}^{(n)}(i_2,i_1) \geq
		1-2\tau .
	\end{align}
	Consequently, choosing $\tau>0$ sufficiently small such that
	$2\tau<1$, we obtain
	\[
	\lambda_1^{(n)}+\lambda_2^{(n)}
	\geq
	1-2\tau>0,
	\qquad \forall\, n\geq n_0 .
	\]
	This contradicts the assumed vanishing-error criterion, since the sequence of
	codes satisfies
	\[
	\lambda_1^{(n)}\to0,
	\qquad
	\lambda_2^{(n)}\to0,
	\qquad n\to\infty,
	\]
	and hence
	\[
	\lambda_1^{(n)}+\lambda_2^{(n)}\to0.
	\]
	Therefore, the assumption in \eqref{Eq.alpha_n} must be false, completing the proof of Lemma~\ref{Lem.Converse}.
\end{proof}

\section{Proof of Lemma \ref{Lem.Log_Likelihood_Ratio}; Output-Density Ratio Bound}
\label{App.Log_likelihood_Ratio}
In the following, we provide the proof of Lemma \ref{Lem.Log_Likelihood_Ratio}.	Before we proceed, for the notatinoal simplicity, we introduce the following conevntions for every $t \in [\![n]\!]$ and formualte bounds induced from the regularity condition imposed on the noise in Section \ref{Sec.Res}.
\begin{itemize}[leftmargin=0mm]
	\item[] \hspace{-.7mm} $\delta_{i_1,i_2,t}^A \hspace{-1mm} \triangleq \hspace{-1mm} (c_{i_1,t} - c_{i_2,t}) / y_t - c_{i_2,t}$ with $|\delta_{i_1,i_2,t}^A| \leq 2\alpha_n / z_t .$
	\item[] \hspace{-.7mm} $\delta_{i_1,i_2,t}^B \hspace{-1mm} \triangleq \hspace{-1mm} (c_{i_2,t} - c_{i_1,t}) / (y_t - c_{i_1,t})(y_t - c_{i_2,t}), |\delta_{i_1,i_2,t}^B| \hspace{-.6mm} \leq \hspace{-.6mm} 2\alpha_n / z_t^2 .$
\end{itemize}

\begin{proof}
	We start by calculating the logarithm of noise PDFs for distinct codewords $\fc_{i_1}$ and $\fc_{i_2},$ i.e., $f_{\fZ}(\fy - \fc_{i_2}) / f_{\fZ}(\fy - \fc_{i_1}).$ Recall that the noise density provided in \eqref{Eq.Noise_PDF_Vect} is given by
	\begin{align}
		f_{\fZ}(\fz) & = \prod_{t=1}^n f_{Z_t}(z_t) 	
		\nonumber\\&
		= \Big( \frac{\lambda}{2\pi} \Big)^{n/2} \cdot \prod_{t=1}^n z_t^{-3/2} \hspace{-1mm} \cdot \exp\left(\hspace{-.7mm} -\frac{\lambda}{2\mu^2} \sum_{t=1}^n \frac{(z_t - \mu)^2}{z_t} \right) .
	\end{align}
	Thereby, $f_{\fZ}(\fy - \fc_{i_2})/ f_{\fZ}(\fy - \fc_{i_1}) = A_{i_1,i_2} \cdot B_{i_1,i_2},$ where
	\begin{align}
	A_{i_1,i_2} & = \prod_{t=1}^n \left(\frac{y_t - c_{i_1,t}}{y_t - c_{i_2,t}}\right)^{3/2} \hspace{-1.5mm} \text{and } \, B_{i_1,i_2} = \exp\!\left( - \frac{\lambda \Delta_{i_1,i_2}}{2\mu^2} \right).
\end{align}
with
\begin{align}
	\Delta_{i_1,i_2}
	\triangleq
	\sum_{t=1}^n
	\left[
	\frac{(y_t-c_{i_2,t}-\mu)^2}{y_t-c_{i_2,t}}
	-
	\frac{(y_t-c_{i_1,t}-\mu)^2}{y_t-c_{i_1,t}}
	\right].
\end{align}
Therefore, logarithm of the ratio of the noise densities reads
	\begin{align}
		\label{Eq.Log_Ratio}
		\log \bigg( \frac{f_{\fZ}(\fy - \fc_{i_2})}{f_{\fZ}(\fy - \fc_{i_1})} \bigg) = \log A_{i_1,i_2} + \log B_{i_1,i_2},
	\end{align}
	where
	\begin{align}
		\log A_{i_1,i_2} & = \frac{3}{2} \sum_{t=1}^n	\log\left(\frac{y_t - c_{i_1,t}}{y_t - c_{i_2,t}}\right)
		\nonumber\\&
		= \frac{3}{2} \sum_{t=1}^n \log\left( 1 - \frac{c_{i_1,t} - c_{i_2,t}}{y_t - c_{i_2,t}} \right)
		\nonumber\\&
		= \frac{3}{2} \sum_{t=1}^n \log\left( 1 - \delta_{i_1,i_2,t}^A \right)
	\end{align}
	and
	\begin{align}
		\label{Eq.Log_B}
		& \log B_{i_1,i_2}
		\nonumber\\&
		= - \frac{\lambda}{2\mu^2} \sum_{t=1}^n \left[ \big( c_{i_1,t} - c_{i_2,t} \big) + \mu^2\left( \frac{1}{y_t - c_{i_2,t}} - \frac{1}{y_t - c_{i_1,t}} \right) \right]
		\nonumber\\&
		= - \frac{\lambda}{2\mu^2} \sum_{t=1}^n \left[ \big( c_{i_1,t} - c_{i_2,t} \big) + \mu^2 \left( \frac{c_{i_2,t} - c_{i_1,t}}{(y_t - c_{i_1,t})(y_t - c_{i_2,t})} \right) \right]
		\nonumber\\&
		= - \frac{\lambda}{2\mu^2}
		\sum_{t=1}^n
		\left[
		(c_{i_1,t}-c_{i_2,t}) + \mu^2 \delta_{i_1,i_2,t}^B
		\right].
	\end{align}
	where for the second equality in \eqref{Eq.Log_B} we have used the following identity 
	\begin{align}
		\sum_{t=1}^n \frac{(z_t - \mu)^2}{z_t} = \sum_{t=1}^n z_t - 2\mu n + \mu^2 \sum_{t=1}^n \frac{1}{z_t}
	\end{align}
	with $z_t = y_t - c_{i_k,t}, \, \forall k \in \{i_1,i_2\}.$ Next, to incorporate the effect of upper bound in \eqref{Eq.alpha_n}, we simplify some expression in \eqref{Eq.Log_Ratio} as follows. Next, observe that
	
	\begin{align}
		\label{Ineq.A_i1_i2_UB}
		|\log A_{i_1,i_2}|
		& \stackrel{(a)}{\leq} \frac{3}{2} \sum_{t=1}^n \left| \log (1 - \delta_{i_1,i_2,t}^A ) \right|
		\nonumber\\
		& \stackrel{(b)}{=}
		\frac{3}{2} \sum_{t=1}^n \left( |\delta_{i_1,i_2,t}^A| + \mathcal{O}\big((\delta_{i_1,i_2,t}^A)^2\big) \right)
		\nonumber\\
		& \stackrel{(c)}{\leq}
		\frac{3}{2} \sum_{t=1}^n |\delta_{i_1,i_2,t}^A|
		\;+\;
		\mathcal{O}\!\left(\sum_{t=1}^n (\delta_{i_1,i_2,t}^A)^2\right)
		\nonumber\\
		& \stackrel{(d)}{\leq}
		\frac{3}{2} \sum_{t=1}^n \frac{2\alpha_n}{Z_{\min}(n,\omega)}
		\;+\;
		\mathcal{O}\!\left(\sum_{t=1}^n \frac{\alpha_n^2}{Z_{\min}^2(n,\omega)}\right)
		\nonumber\\
		& \stackrel{(e)}{=}
		3 n \alpha_n \mathcal{O}(\log n)
		\;+\;
		\mathcal{O}\!\left(n \alpha_n^2 \log^2 n\right)
		\nonumber\\
		& =
		\mathcal{O}\!\left(\frac{\log n}{n^{2b}}\right)
		\;+\;
		\mathcal{O}\!\left(\frac{\log^2 n}{n^{1+4b}}\right)
		=
		\mathcal{O}\!\left(\frac{\log n}{n^{2b}}\right)
		\triangleq \tau_1 ,
	\end{align}
	where
	\begin{itemize}[leftmargin=*]
		\item $(a)$ uses the generalized triangle inequality, i.e., $|\log A_{i_1,i_2}|
		\le \frac{3}{2} \sum_{t=1}^n |\log(1-\delta_{i_1,i_2,t}^A)|.$
		\item $(b)$ uses the Taylor expansion around zero $\log(1-\delta) = -\delta - \frac{\delta^2}{2} + O(\delta^3)$ and $|\log(1-\delta)| = |\delta| + \mathcal{O}(\delta^2)$ with $\delta = \delta_{i_1,i_2,t}^A$ which is valid since \( |\delta_{i_1,i_2,t}^A| \to 0\) uniformly in $t,$ i.e., $$\delta_{i_1,i_2,t}^A \leq 2\alpha_n / Z_{\rm min}(n,\omega) = C \log n / n^{1+2b} \to 0$$ as $n$ increases for some constant $C$ where the inequality exploits Lemma \ref{Lem.AS_LB}.
		\item $(c)$ separates first-order and higher-order contributions.
		\item $(d)$ uses $\delta_{i_1,i_2,t}^A = (c_{i_1,t} - c_{i_2,t}) / y_t-c_{i_2,t}, |c_{i_1,t}-c_{i_2,t}| \le 2\alpha_n,$ and $y_t-c_{i_2,t} \ge Z_{\min}(n,\omega),$
		implying
		\[
		|\delta_{i_1,i_2,t}^A|
		\le \frac{\alpha_n}{Z_{\min}(n,\omega)},
		\quad
		(\delta_{i_1,i_2,t}^A)^2
		\le \frac{\alpha_n^2}{Z_{\min}^2(n,\omega)}.
		\]
		\item $(e)$ uses the almost sure bound provided in Lemma \ref{Lem.AS_LB}, i.e.,
		\[
		\frac{1}{Z_{\min}(n,\omega)} = \mathcal{O}(\log n) \, \text{ and } \,
		\frac{1}{Z_{\min}^2(n,\omega)} = \mathcal{O}(\log^2 n),
		\]
		together with \eqref{Eq.alpha_n}. Hence
		\[
		\sum_{t=1}^n |\delta_t^A|
		=
		\mathcal{O}\!\left(\frac{\log n}{n^{2b}}\right),
		\quad
		\sum_{t=1}^n (\delta_t^A)^2
		=
		\mathcal{O}\!\left(\frac{\log^2 n}{n^{1+4b}}\right).
		\]
	\end{itemize}

	Now, we focus on $\log B_{i_1,i_2}.$ Observe that
	\begin{align}
		\label{Ineq.B_i1_i2_UB}
		|\log B_{i_1,i_2}|
		& \stackrel{(a)}{\leq}
		\frac{\lambda}{2\mu^2}
		\sum_{t=1}^n
		\left(
		|c_{i_1,t}-c_{i_2,t}|
		+
		\mu^2 |\delta_{i_1,i_2,t}^B|
		\right)
		\nonumber\\
		& \stackrel{(b)}{\leq}
		\frac{\lambda}{2\mu^2}
		\left(
		2n\alpha_n
		+
		2\mu^2
		\frac{n\alpha_n}{Z_{\min}^2(n,\omega)}
		\right)
		\nonumber\\
		& \stackrel{(c)}{=}
		\frac{\lambda}{\mu^2}
		\left(
		n\alpha_n
		+
		\mu^2 n\alpha_n\,\mathcal{O}(\log^2 n)
		\right)
		\nonumber\\
		& \stackrel{(d)}{=}
		\mathcal{O}\!\left(\frac{1}{n^{2b}}\right)
		+
		\mathcal{O}\!\left(\frac{\log^2 n}{n^{2b}}\right)
		=
		\mathcal{O}\!\left(\frac{\log^2 n}{n^{2b}}\right)
		\triangleq \tau_2 ,
	\end{align}
	where
	\begin{itemize}[leftmargin=*]
		\item $(a)$ uses the generalized triangle inequality for \eqref{Eq.Log_B}.
		\item $(b)$ uses \eqref{Eq.alpha_n} and
		\[
		|\delta_{i_1,i_2,t}^B|
		=
		\left|
		\frac{c_{i_2,t}-c_{i_1,t}}
		{(y_t-c_{i_1,t})(y_t-c_{i_2,t})}
		\right|
		\le
		\frac{\alpha_n}
		{Z_{\min}^2(n,\omega)}.
		\]
		\item $(c)$ uses \eqref{Eq.alpha_n} and the almost sure lower bound given in Lemma \ref{Lem.AS_LB} which implies
		\[
		\frac{1}{Z_{\min}^2(n,\omega)}
		=
		\mathcal{O}(\log^2 n).
		\]
		\item $(d)$ uses
		\[
		n\alpha_n
		=
		\frac{2a^2}{n^{2b}}
		=
		\mathcal{O}\!\left(\frac{1}{n^{2b}}\right)
		\, \text{and} \,
		\frac{n\alpha_n}
		{Z_{\min}^2(n,\omega)}
		=
		\mathcal{O}\!\left(
		\frac{\log^2 n}{n^{2b}}
		\right).
		\]
	\end{itemize}
	Thereby, \eqref{Ineq.A_i1_i2_UB} and \eqref{Ineq.B_i1_i2_UB} gives
	\begin{align}
		\log \bigg( \frac{f_{\fZ}(\fy - \fc_{i_2})}{f_{\fZ}(\fy - \fc_{i_1})} \bigg) \leq | \log A_{i_1,i_2} | + | \log B_{i_1,i_2} | \le \tau_1 + \tau_2 ,
	\end{align}
	which implies
	\begin{align}
		\frac{f_{\fZ}(\fy - \fc_{i_2})}{f_{\fZ}(\fy - \fc_{i_1})} \le e^{\tau_1 + \tau_2}.
	\end{align}
	Next, we proceed to show that
	\begin{align}
		\label{Eq.Log_Lik-Ratio_1}
		e^{-(\tau_1 + \tau_2)} \le \frac{f_{\fZ}(\fy - \fc_{i_2})}{f_{\fZ}(\fy - \fc_{i_1})} .
	\end{align}
	Observe that
	\begin{align}
		\label{Ineq.A_i1_i2_LB}
		& \log A_{i_1,i_2}
		=
		\frac{3}{2}
		\sum_{t=1}^n
		\log\!\left(
		1-\delta_{i_1,i_2,t}^A
		\right)
		\nonumber\\
		&\stackrel{(a)}{\ge}
		-\frac{3}{2}
		\sum_{t=1}^n
		\left(
		|\delta_{i_1,i_2,t}^A|
		+
		\mathcal{O}\big((\delta_{i_1,i_2,t}^A)^2\big)
		\right)
		\nonumber\\
		&\stackrel{(b)}{\ge}
		-\frac{3}{2}
		\sum_{t=1}^n
		|\delta_{i_1,i_2,t}^A|
		-
		\mathcal{O}\!\left(
		\sum_{t=1}^n
		(\delta_{i_1,i_2,t}^A)^2
		\right)
		\nonumber\\
		&\stackrel{(c)}{\ge}
		-\frac{3}{2}
		\sum_{t=1}^n
		\frac{2\alpha_n}{Z_{\min}(n,\omega)}
		-
		\mathcal{O}\!\left(
		\sum_{t=1}^n
		\frac{\alpha_n^2}{Z_{\min}^2(n,\omega)}
		\right)
		\nonumber\\
		&\stackrel{(d)}{=}
		-3
		n\alpha_n
		\mathcal{O}(\log n)
		-
		\mathcal{O}\!\left(
		n\alpha_n^2
		\log^2 n
		\right)
		\nonumber\\
		& \stackrel{(e)}{=}
		-\mathcal{O}\!\left(
		\frac{\log n}{n^{2b}}
		\right)
		-
		\mathcal{O}\!\left(
		\frac{\log^2 n}{n^{1+4b}}
		\right) 
		=
		-\mathcal{O}\!\left(
		\frac{\log n}{n^{2b}}
		\right)
		\triangleq
		-\tau_1 ,
	\end{align}
	
	where
	\begin{itemize}[leftmargin=*]
		\item $(a)$ uses the Taylor expansion $\log(1-\delta)
		=
		-\delta-\frac{\delta^2}{2}+O(\delta^3)$ which implies
		\[
		\log(1-\delta)
		\ge
		-|\delta|
		-
		\mathcal{O}(\delta^2).
		\]
		This expansion is valid since
		\[
		|\delta_{i_1,i_2,t}^A|
		\le
		\frac{\alpha_n}{Z_{\min}(n,\omega)}
		=
		\mathcal{O}\!\left(
		\frac{\log n}{n^{1+2b}}
		\right)
		\to 0
		\]
		uniformly in \(t\), where the inequality follows \eqref{Eq.alpha_n} and Lemma~\ref{Lem.AS_LB}.
		
		\item $(b)$ separates the first-order and second-order contributions:
		\[
		\sum_{t=1}^n
		\left(
		|\delta_t^A|
		+
		\mathcal{O}\!\big((\delta_t^A)^2\big)
		\right)
		=
		\sum_{t=1}^n |\delta_t^A|
		+
		\mathcal{O}\!\left(
		\sum_{t=1}^n (\delta_t^A)^2
		\right).
		\]
		\item $(c)$ uses $\delta_{i_1,i_2,t}^A = (c_{i_1,t} - c_{i_2,t}) / y_t-c_{i_2,t}, |c_{i_1,t}-c_{i_2,t}| \le 2\alpha_n,$ and $y_t-c_{i_2,t} \ge Z_{\min}(n,\omega),$
		implying
		\[
		|\delta_{i_1,i_2,t}^A|
		\le \frac{\alpha_n}{Z_{\min}(n,\omega)},
		\quad
		(\delta_{i_1,i_2,t}^A)^2
		\le \frac{\alpha_n^2}{Z_{\min}^2(n,\omega)}.
		\]
		\item $(e)$ uses the almost sure bound provided in Lemma \ref{Lem.AS_LB}, i.e.,
		\[
		\frac{1}{Z_{\min}(n,\omega)} = \mathcal{O}(\log n) \, \text{ and } \,
		\frac{1}{Z_{\min}^2(n,\omega)} = \mathcal{O}(\log^2 n),
		\]
		together with \eqref{Eq.alpha_n}. Hence
		\[
		\sum_{t=1}^n |\delta_t^A|
		=
		\mathcal{O}\!\left(\frac{\log n}{n^{2b}}\right),
		\quad
		\sum_{t=1}^n (\delta_t^A)^2
		=
		\mathcal{O}\!\left(\frac{\log^2 n}{n^{1+4b}}\right).
		\]
		\item $(d)$ uses Lemma~\ref{Lem.AS_LB} combinig with \eqref{Eq.alpha_n} which gives
		\[
		n\alpha_n
		=
		\mathcal{O}\!\left(\frac{1}{n^{2b}}\right),
		\qquad
		n\alpha_n^2
		=
		\mathcal{O}\!\left(\frac{1}{n^{1+4b}}\right).
		\]
		\item $(e)$ hold since
		\[
		\sum_{t=1}^n |\delta_t^A|
		=
		\mathcal{O}\!\left(
		\frac{\log n}{n^{2b}}
		\right),
		\, \text{and} \,
		\sum_{t=1}^n (\delta_t^A)^2
		=
		\mathcal{O}\!\left(
		\frac{\log^2 n}{n^{1+4b}}
		\right).
		\]
	\end{itemize}
	
	Now, we focus on $\log B_{i_1,i_2}.$
	
	\begin{align}
		\label{Ineq.B_i1_i2_LB}
		& \log B_{i_1,i_2}
		\stackrel{(a)}{\geq}
		-\frac{\lambda}{2\mu^2}
		\sum_{t=1}^n
		\left(
		|c_{i_1,t}-c_{i_2,t}|
		+
		\mu^2 |\delta_{i_1,i_2,t}^B|
		\right)
		\nonumber\\
		& \stackrel{(b)}{\geq}
		-\frac{\lambda}{2\mu^2}
		\left(
		2n\alpha_n
		+
		2\mu^2
		\frac{n\alpha_n}{Z_{\min}^2(n,\omega)}
		\right)
		\nonumber\\
		& \stackrel{(c)}{=}
		-\frac{\lambda}{\mu^2}
		\left(
		n\alpha_n
		+
		\mu^2 n\alpha_n\,\mathcal{O}(\log^2 n)
		\right)
		\nonumber\\
		& \stackrel{(d)}{=}
		-\mathcal{O}\!\left(\frac{1}{n^{2b}}\right)
		-
		\mathcal{O}\!\left(\frac{\log^2 n}{n^{2b}}\right)
		=
		-\mathcal{O}\!\left(\frac{\log^2 n}{n^{2b}}\right)
		\triangleq -\tau_2 ,
	\end{align}
	
	where
	\begin{itemize}[leftmargin=*]
		
		\item $(a)$ uses \eqref{Eq.Log_B} and since $-(x+y) \ge -(|x|+|y|),$ therefore
		\[
		\log B_{i_1,i_2}
		\ge
		-\frac{\lambda}{2\mu^2}
		\sum_{t=1}^n
		\left(
		|c_{i_1,t}-c_{i_2,t}|
		+
		\mu^2|\delta_{i_1,i_2,t}^B|
		\right).
		\]
		
		\item $(b)$ uses \eqref{Eq.alpha_n} and
		\[
		|\delta_{i_1,i_2,t}^B|
		=
		\left|
		\frac{c_{i_2,t}-c_{i_1,t}}
		{(y_t-c_{i_1,t})(y_t-c_{i_2,t})}
		\right|
		\le
		\frac{2\alpha_n}
		{Z_{\min}^2(n,\omega)}.
		\]
		\item $(c)$ uses the almost sure lower bound provided by Lemma~\ref{Lem.AS_LB} which implies
		\[
		\frac{n\alpha_n}
		{Z_{\min}^2(n,\omega)}
		=
		n\alpha_n\,\mathcal{O}(\log^2 n).
		\]
		
		\item $(d)$ uses $\alpha_n = 2a^2 / n^{1+2b}$ given in \eqref{Eq.alpha_n} which yields
		\[
		n\alpha_n
		=
		\frac{2a^2}{n^{2b}}
		=
		\mathcal{O}\!\left(\frac{1}{n^{2b}}\right) 
		\, \text{ and } \,
		\frac{n\alpha_n}
		{Z_{\min}^2(n,\omega)}
		=
		\mathcal{O}\!\left(
		\frac{\log^2 n}{n^{2b}}
		\right) \hspace{-1mm}. 
		\]
	\end{itemize}
	Thereby, \eqref{Ineq.A_i1_i2_LB} and \eqref{Ineq.B_i1_i2_LB} gives
	\begin{align}
		\log \bigg( \frac{f_{\fZ}(\fy - \fc_{i_2})}{f_{\fZ}(\fy - \fc_{i_1})} \bigg) \geq \log A_{i_1,i_2} + \log B_{i_1,i_2} \geq - (\tau_1 + \tau_2) ,
	\end{align}
	which implies
	\begin{align}
		\label{Eq.Log_Lik-Ratio_2}
		e^{-(\tau_1 + \tau_2)} \le \frac{f_{\fZ}(\fy - \fc_{i_2})}{f_{\fZ}(\fy - \fc_{i_1})} .
	\end{align}
	Now, mergin \eqref{Eq.Log_Lik-Ratio_1} and \eqref{Eq.Log_Lik-Ratio_2} and using Taylor expansion $e^{\tau_1 + \tau_2} = 1 + (\tau_1 + \tau_2) + o(\tau_1 + \tau_2)$ gives
	\begin{align}
		\left| \log \bigg( \frac{f_{\fZ}(\fy - \fc_{i_2})}{f_{\fZ}(\fy - \fc_{i_1})} \bigg) \right| \leq 1 + (\tau_1 + \tau_2) + o(\tau_1 + \tau_2) .
	\end{align}
	Thereby, fror sufficiently small $\tau$ so that $\tau \leq \tau_1 + \tau_2$ we obtain
	\begin{align}
		\bigg| 1 - \frac{f_{\fZ}(\fy - \fc_{i_2})}{f_{\fZ}(\fy - \fc_{i_1})} \bigg| \leq \tau .
	\end{align}
	This completes the proof of Lemma \ref{Lem.Log_Likelihood_Ratio}.	
\end{proof}

\section{Proof of Lemma \ref{Lem.AS_LB}; Eventually Almost-Sure Lower Bound for the Minimum of an i.i.d. IG Sequence}
\label{App.AS_LB}

\begin{proof}
	The density of an inverse Gaussian random variable is given by
	\begin{equation}
		f_Z(z)
		=
		\left(
		\frac{\lambda}{2\pi z^3}
		\right)^{1/2}
		\exp\!\left(
		-\frac{\lambda(z-\mu)^2}{2\mu^2 z}
		\right),
		\qquad z>0.
	\end{equation}
	Observe that
	\begin{align}
		\frac{(z-\mu)^2}{z} = \frac{z^2-2\mu z+\mu^2}{z} =
		z-2\mu+\frac{\mu^2}{z}.
	\end{align}
	Hence,
	\begin{align}
		\frac{\lambda(z-\mu)^2}{2\mu^2z}
		&=
		\frac{\lambda z}{2\mu^2}
		-\frac{\lambda}{\mu}
		+\frac{\lambda}{2z}.
	\end{align}
	Substituting this identity into the density yields
	\begin{align}
		f_Z(z)
		&
		 =
		\left(\frac{\lambda}{2\pi}\right)^{1/2}
		z^{-3/2}
		\exp\!\left(
		-\frac{\lambda z}{2\mu^2}
		+\frac{\lambda}{\mu}
		-\frac{\lambda}{2z}
		\right)
		\nonumber\\&
		=
		C_0\,z^{-3/2}
		\exp\!\left(
		-\frac{\lambda z}{2\mu^2}
		\right)
		\exp\!\left(
		-\frac{\lambda}{2z}
		\right),
	\end{align}
	where
	$
	C_0
	=
	\left(\frac{\lambda}{2\pi}\right)^{1/2}
	e^{\lambda/\mu}.
	$
	Since
	$
	\exp\!\left(
	-\frac{\lambda z}{2\mu^2}
	\right)
	\le 1,
	$
	it follows that
	\begin{equation}
		f_Z(z)
		\le
		C_0 z^{-3/2}
		\exp\!\left(
		-\frac{\lambda}{2z}
		\right),
		\qquad z>0.
	\end{equation}
	Let $\varepsilon>0$. Then
	\begin{align}
		\Pr(Z_t\le \varepsilon)
		=
		\int_0^\varepsilon f_Z(z)\,dz
		\le
		C_0
		\int_0^\varepsilon
		z^{-3/2}
		\exp\!\left(
		-\frac{\lambda}{2z}
		\right)
		dz.
	\end{align}
	Define
	\begin{equation}
		I(\varepsilon)
		\triangleq
		\int_0^\varepsilon
		z^{-3/2}
		\exp\!\left(
		-\frac{\lambda}{2z}
		\right)
		dz.
	\end{equation}
	Introduce the change of variables
	$
	u=\frac{\lambda}{2z}.
	$
	Then
	$
	z=\frac{\lambda}{2u},\,
	dz
	=
	-\frac{\lambda}{2u^2}\,du$ and $
	z^{-3/2}
	=
	\left(
	\frac{2u}{\lambda}
	\right)^{3/2}.
	$
	Therefore,
	\begin{align}
		I(\varepsilon)
		& =
		\int_{\infty}^{\lambda/(2\varepsilon)}
		\left(
		\frac{2u}{\lambda}
		\right)^{3/2}
		e^{-u}
		\left(
		-\frac{\lambda}{2u^2}
		\right)
		du
		\nonumber\\&
		= \sqrt{\frac{2}{\lambda}}
		\int_{\lambda/(2\varepsilon)}^\infty
		u^{-1/2}e^{-u}\,du.
	\end{align}
	Consequently,
	\begin{equation}
		\Pr(Z_t\le \varepsilon)
		\le
		C_1
		\int_{\lambda/(2\varepsilon)}^\infty
		u^{-1/2}e^{-u}\,du,
	\end{equation}
	where
	$
	C_1
	=
	C_0\sqrt{\frac{2}{\lambda}}.
	$
	Now let
	$
	a=\frac{\lambda}{2\varepsilon}.
	$
	Since $u^{-1/2}\le a^{-1/2}$ for all $u\ge a$, we have
	\begin{align}
		\int_a^\infty
		u^{-1/2}e^{-u}\,du
		\le
		a^{-1/2}
		\int_a^\infty
		e^{-u}\,du
		=
		a^{-1/2}e^{-a}.
	\end{align}
	Substituting $a=\lambda/(2\varepsilon)$ yields
	\begin{equation}
		\Pr(Z_t\le \varepsilon)
		\le
		C_2
		\sqrt{\varepsilon}
		\exp\!\left(
		-\frac{\lambda}{2\varepsilon}
		\right).
	\end{equation}
	Since $\sqrt{\varepsilon}\le 1$ for sufficiently small $\varepsilon$, there exists a constant $C_3>0$ such that
	\begin{equation}
		\Pr(Z_t\le \varepsilon)
		\le
		C_3
		\exp\!\left(
		-\frac{\lambda}{2\varepsilon}
		\right)
	\end{equation}
	for all sufficiently small $\varepsilon.$ Next, by the union bound, we obtain
	\begin{align}
		 \Pr(Z_{\min}(n,\omega) \le \varepsilon)
		& =
		\Pr\!\left(
		\bigcup_{t=1}^{n}
		\{Z_t\le \varepsilon\}
		\right)
		\le
		\sum_{t=1}^{n}
		\Pr(Z_t\le \varepsilon)
		\nonumber\\&
		=
		n\,\Pr(Z_1\le \varepsilon)
		\le
		nC_3
		\exp\!\left(
		-\frac{\lambda}{2\varepsilon}
		\right) \hspace{-1mm}  .
	\end{align}
	Fix $\eta>1$ and define $\varepsilon_n' \triangleq \lambda / 2(1+\eta)\log n.$
	Then $\lambda / 2\varepsilon_n' = (1+\eta)\log n.$ Substituting this choice into the previous bound gives
	\begin{align}
		\label{eq:min_tail}
		\Pr( Z_{\min}(n,\omega) \le \varepsilon_n')
		\le
		nC_3
		e^{-(1+\eta)\log n}
		=
		C_3 n^{-\eta}.
	\end{align}
	Since $\eta>1,$ we obtain $\sum_{n=1}^{\infty}
	n^{-\eta}
	<
	\infty.$ Therefore,
	\begin{equation}
		\sum_{n=1}^{\infty}
		\Pr(Z_{\min}(n,\omega) \le \varepsilon_n')
		<
		\infty.
	\end{equation}
	By the first Borel--Cantelli lemma,
	\begin{equation}
		\Pr\!\left(
		Z_{\min}(n,\omega) \le \varepsilon_n'
		\ \text{infinitely often}
		\right)
		=
		0.
	\end{equation}
	Hence, with probability one, there exists a finite random index $n_0 = n_0(\omega)$ such that
	\[
	Z_{\min}(n,\omega)
	\ge
	\frac{\lambda}{2(1+\eta)\log n},
	\qquad
	\forall n \ge n_0(\omega),
	\]
	where $\omega \in \Omega$ denotes the underlying noise realization in the probability space $(\Omega,\mathcal{F},\Pr)$, or equivalently, let define
	\begin{align*}
		\mathcal{E}_{Z_{\min}} \hspace{-.7mm} \triangleq \hspace{-.7mm}
		\left\{
		\exists\, n_0<\infty \text{ s.t. }
		Z_{\min}(n)
		\ge
		\frac{\lambda}{2(1+\eta)\log n}, \forall n\ge n_0
		\right\} \hspace{-.7mm}.
	\end{align*}
	Then, $\Pr(\mathcal{E}_{Z_{\min}}) = 1.$ Thereby,
	\[
	Z_{\min}
	=
	\Omega\!\left(\frac{1}{\log n}\right)
	\qquad\text{eventually almost sure}
	\]
	which implies the following equivalent bounds
	\[
	\frac{1}{Z_{\min}}
	=
	O(\log n)
	\qquad\text{and}\qquad
	\frac{1}{Z_{\min}^2}
	=
	O(\log^2 n) .
	\]
	
	Observe that the threshold $n_0(\omega)$ is generally realization-dependent and need not be deterministic. Indeed, two different noise realizations
	\[
	\omega = (Z_1(\omega), Z_2(\omega), \ldots)
	\quad \text{and} \quad
	\omega' = (Z_1(\omega'), Z_2(\omega'), \ldots)
	\]
	may induce different values of $n_0$. The first Borel--Cantelli lemma guarantees that such a finite threshold exists for almost every $\omega \in \Omega$. Equivalently, define the bad event
	\[
	\mathcal{G}_n^c
	=
	\left\{
	\omega \in \Omega :
	Z_{\min}(n,\omega)
	<
	\frac{\lambda}{2(1+\eta)\log n}
	\right\}.
	\]
	Then $\Pr\!\left(\mathcal{G}_n^c \ \mathrm{i.o.}\right)=0$ which implies that for almost every $\omega \in \Omega$, the event $\mathcal{G}_n^c$ occurs only finitely many times. Therefore, for almost every noise realization $\omega$, there exists a finite blocklength $n_0(\omega)$ such that $\mathcal{G}_n(\omega)$ occurs for every $n \ge n_0(\omega).$
	
\end{proof}

	\vspace{-7mm}
		
	\section*{}
	\bibliographystyle{IEEEtran}
	\bibliography{Lit}
	
\end{document}

%% file: PKG.tex
\usepackage{etoolbox}
\usepackage{enumitem}
\usepackage[bottom]{footmisc}
\usepackage[dvipsnames,svgnames]{xcolor}
\usepackage[noadjust]{cite}
\usepackage{tikz}
\usetikzlibrary{arrows,arrows.meta,shapes.geometric,math}
\usepackage{pgfplots}
\pgfplotsset{compat=1.15}
\usepackage{mathtools}
\usepackage{nccmath}
\usepackage{amssymb}
\usepackage{mathdots}
\usepackage{upgreek}
\usepackage[caption=false]{subfig}
\usepackage{commath}
\usepackage{amsthm}
\usepackage{graphicx}
\usepackage{pgfgantt}
\usepackage{xfrac}
\usepackage{mathrsfs}
\usepackage{bbm}
\usepackage{multicol, blindtext}
\usepackage{amsbsy}
\usepackage{bm}
\usepackage{fixmath}
\usepackage{silence}
\usepackage{ctable}
\usepackage{mathdots}
\usetikzlibrary{math}

\usepackage{mathtools}
\def\nudge{.5}
\tikzset{axis/.style={ultra thick, Red!75!black, -latex, shorten <=-\nudge cm, shorten >=-2*\nudge cm}}
\tikzset{line/.style={thick,Green}}

\usepackage{tikz}
\usetikzlibrary{decorations.pathmorphing,calc}
\pgfmathsetseed{1} 
\pgfdeclarelayer{background}
\pgfsetlayers{background,main}

\usepackage[draft=false]{hyperref}

\usepackage{cleveref}
\crefname{equation}{Eq}{} 
\crefrangelabelformat{equation}{(#3#1#4--#5#2#6)}

\def\fc{{\bf c}}
\def\fA{{\bf A}}

\def\fx{{\bf x}}

\usepackage{autobreak}
\allowdisplaybreaks

\def \D{\mathcal{D}}
\def \E{\mathcal{E}}

\def \S{\mathcal{S}}

\def \fA{\mathbf{A}}

\def \fc{\mathbf{c}}
\def \fx{\mathbf{x}}

\def \bIG {\mathfrak{I}}

\def \fh{\mathbf{h}}

\def \fx{\mathbf{x}}
\def \fy{\mathbf{y}}
\def \fz{\mathbf{z}}

\def \fX{\mathbf{X}}
\def \fY{\mathbf{Y}}
\def \fZ{\mathbf{Z}}
\def \f0{\mathbf{0}}

\def\fc{{\bf c}}
\def\fA{{\bf A}}

\def\fx{{\bf x}}

\definecolor{blau_1a}{RGB}{93,133,195}
\definecolor{blau_2a}{RGB}{0,156,218}
\definecolor{gruen_3a}{RGB}{80,182,149}
\definecolor{gruen_4a}{RGB}{175,204,80}
\definecolor{gruen_5a}{RGB}{221,223,72}
\definecolor{orange_6a}{RGB}{255,224,92}
\definecolor{orange_7a}{RGB}{248,186,60}
\definecolor{rot_8a}{RGB}{238,122,52}
\definecolor{rot_9a}{RGB}{233,80,62}
\definecolor{lila_10a}{RGB}{201,48,142}
\definecolor{lila_11a}{RGB}{128,69,151}

\definecolor{blau_1b}{RGB}{0,90,169}
\definecolor{blau_2b}{RGB}{0,131,204}
\definecolor{gruen_3b}{RGB}{0,157,129}
\definecolor{gruen_4b}{RGB}{153,192,0}
\definecolor{gruen_5b}{RGB}{201,212,0}
\definecolor{orange_6b}{RGB}{253,202,0}
\definecolor{orange_7b}{RGB}{245,163,0}
\definecolor{rot_8b}{RGB}{236,101,0}
\definecolor{rot_9b}{RGB}{230,0,26}
\definecolor{lila_10b}{RGB}{166,0,132}
\definecolor{lila_11b}{RGB}{114,16,133}

\definecolor{mycolor1}{rgb}{0.0, 0.18, 0.39}
\definecolor{mycolor2}{RGB}{87,108,67}
\definecolor{mycolor3}{RGB}{8,133,161}
\definecolor{mycolor4}{RGB}{80,91,161}
\definecolor{mycolor5}{RGB}{98,122,157}
\definecolor{mycolor6}{RGB}{255,163,67}
\definecolor{mycolor7}{RGB}{152,205,225}
\definecolor{mycolor8}{RGB}{242,204,48}
\definecolor{mycolor9}{rgb}{0,.5,0}
\definecolor{mycolor10}{rgb}{.59,.44,.09}
\definecolor{mycolor11}{RGB}{231,199,31} 
\definecolor{mycolor12}{RGB}{8,133,161} 
\definecolor{mycolor13}{RGB}{157,188,64} 
\definecolor{mycolor14}{RGB}{194,150,130} 
\definecolor{mycolor15}{RGB}{98,122,157} 
\definecolor{mycolor16}{RGB}{160,160,160} 
\definecolor{mycolor17}{RGB}{115,82,68} 
\definecolor{mycolor18}{RGB}{94,60,108} 
\definecolor{mycolor19}{RGB}{115,82,68} 
\definecolor{mycolor20}{RGB}{255,183,30} 
\definecolor{mycolor21}{RGB}{13, 152, 186}
\definecolor{mycolor22}{RGB}{255, 223, 0}

\definecolor{cambridgeblue}{rgb}{0.64, 0.76, 0.68}
\definecolor{darkturquoise}{rgb}{0.0, 0.81, 0.82}
\definecolor{turquoiseblue}{rgb}{0.0, 1.0, 0.94}
\definecolor{turquoisegreen}{rgb}{0.63, 0.84, 0.71}
\definecolor{turquoise}{rgb}{0.19, 0.84, 0.78}
\definecolor{verdigris}{rgb}{0.26, 0.7, 0.68}
\definecolor{wheat}{rgb}{0.96, 0.87, 0.7}

\theoremstyle{remark} \newtheorem{theorem}{Theorem}
\theoremstyle{remark} \newtheorem{proposition}{Proposition}
\theoremstyle{remark} \newtheorem{definition}{Definition}
\theoremstyle{remark} 
\theoremstyle{remark} 
\theoremstyle{remark} \newtheorem{conjecture}{Conjecture}

\newtheorem{innercustomgeneric}{\customgenericname}
\providecommand{\customgenericname}{}
\newcommand{\newcustomtheorem}[2]{%
	\newenvironment{#1}[1]
	{%
		\renewcommand\customgenericname{#2}%
		\renewcommand\theinnercustomgeneric{##1}%
		\innercustomgeneric
	}
	{\endinnercustomgeneric}
}
\newcustomtheorem{customcorollary}{Corollary}
\newcustomtheorem{customlemma}{Lemma}
\newcustomtheorem{customproposition}{Proposition}

%% file: Pos_PDf.tex
\begin{tikzpicture}[scale=0.72, transform shape]
	\begin{axis}[
		xmin=-20,
		xmax=80,
		ymin=0,
		ymax=.17,
		xlabel={$x~(\mu m)$},
		ylabel={$f_X(x;t)$},
		scaled y ticks=false,
		yticklabel style={
			/pgf/number format/fixed,
			/pgf/number format/precision=2
		},
		legend pos=north east,
		domain=-20:80,
		samples=400,
		width=12.5cm,
		height=7cm,
		title={Probability Density Function of Particle's Position},
		minor grid style={gray!20},
		major grid style={gray!50},
		legend style={
			at={(0.7,0.98)},
			anchor=north west,
			draw=none,
			fill=orange_6b!10,
			font=\small
		},
		scaled ticks=false,
		xmajorgrids,
		ymajorgrids,
		grid style={line width=0.2pt, draw=gray!40},
		minor tick num=4,
		]
		
		
		\addplot[very thick, color=blau_2b]
		{1/sqrt(2*pi*(6)) * exp(-((x - 1)^2)/(2*(6)))};
		\addlegendentry{$t=0.1\,s$}

		\addplot[very thick, color=blau_2b!80]
		{1/sqrt(2*pi*(24)) * exp(-((x - 10)^2)/(2*(24)))};
		\addlegendentry{$t=1\,s$}

		\addplot[very thick, color=blau_2b!50]
		{1/sqrt(2*pi*(64)) * exp(-((x - 30)^2)/(2*(64)))};
		\addlegendentry{$t=3\,s$}

		\addplot[very thick, color=blau_2b!30]
		{1/sqrt(2*pi*(104)) * exp(-((x - 50)^2)/(2*(104)))};
		\addlegendentry{$t=5\,s$}
		
	\end{axis}
\end{tikzpicture}

%% file: Sys.tex
\begin{tikzpicture}[xscale=48, yscale=44]
	
	\def\d{0.15}            		
	\def\steps{100}				
	\def\sigma{0.002}    	
	\def\v{0.001}			

	\draw[->] (-0.01,0) -- (\d+0.01,0) node[right] {$x$};
	\draw[->] (0,-0.025) -- (0,0.025) node[above] {$y$};
	
	\draw[thick, dashed] (\d,-0.035) -- (\d,0.035);
	\node[right] at (\d,0.04) {Rx};
	
	\fill[blau_2b] (0,0) circle (0.0014);
	\node[below left] at (0,0) {Tx};
	
	\pgfmathsetmacro{\x}{0}
	\pgfmathsetmacro{\y}{0}
	
	\foreach \i in {1,...,\steps} {
		
		\pgfmathsetmacro{\u}{max(rnd,0.0001)}
		\pgfmathsetmacro{\w}{rnd}
		
		\pgfmathsetmacro{\dx}{\sigma*sqrt(-2*ln(\u))*cos(360*\w)}
		\pgfmathsetmacro{\dy}{\sigma*sqrt(-2*ln(\u))*sin(360*\w)}
		
		\pgfmathsetmacro{\xnew}{\x + \v + \dx}
		\pgfmathsetmacro{\ynew}{\y + \dy}
		
		\draw[SkyBlue, thick] (\x,\y) -- (\xnew,\ynew);
		
		\xdef\x{\xnew}
		\xdef\y{\ynew}
	}
	
	\fill[blue] (\x,\y) circle (0.001);
	
	\draw[<->, thin] (0,-0.017) -- (\d,-0.017) node[midway, below] {$d$};
	
\end{tikzpicture}